\newtheorem{theorem}{Theorem}
\newtheorem{definition}[theorem]{Definition}
\newtheorem{lemma}[theorem]{Lemma}
\newtheorem{corollary}[theorem]{Corollary}
\newcommand\T{\rule{0pt}{2.6ex}}
\newcommand\B{\rule[-1.2ex]{0pt}{0pt}}
\newcommand{\C}{\mathcal C}
\newcommand{\RR}{\mathbb R}
\begin{document}

\title{On the tightness of an SDP relaxation of $k$-means}

\author{Takayuki Iguchi, Dustin G. Mixon, Jesse Peterson, Soledad Villar}
\date{}
\maketitle

\begin{abstract}
Recently, \cite{relax} introduced an SDP relaxation of the $k$-means problem in $\RR^m$. In this work, we consider a random model for the data points in which $k$ balls of unit radius are deterministically distributed throughout $\RR^m$, and then in each ball, $n$ points are drawn according to a common rotationally invariant probability distribution.
For any fixed ball configuration and probability distribution, we prove that the SDP relaxation of the $k$-means problem exactly recovers these planted clusters with probability $1-e^{-\Omega(n)}$ provided the distance between any two of the ball centers is $>2+\epsilon$, where $\epsilon$ is an explicit function of the configuration of the ball centers, and can be arbitrarily small when $m$ is large.
\end{abstract}

\section{Introduction}

Clustering is one central task in unsupervised machine learning.
The problem consists of partitioning a given finite set $P$ into $k$ subsets $\C=\{\C_1,\ldots,\C_k\}$ such that some dissimilarity function is minimized. Usually the similarity criterion is chosen ad hoc with an application in mind. A particularly common clustering criterion is the $k$-means objective. Let $P \subset \RR^m$ a finite set. For $\C_i\subset P$ let $c_i$ be its centroid $c_i=\frac{1}{|\C_i|}\sum_{x\in \C_i} x$. Then the $k$-means problem is
\begin{equation} \label{kmeans}
\mathop{\min_{\C_1\cup\ldots\cup \C_k= P}}_{\C_i\cap \C_j=\emptyset} \sum_{i=1}^k \sum_{x\in \C_i} {\| x-c_i \|^2}.
\end{equation}

Problem \eqref{kmeans} is NP-hard in general~\cite{jms06}. A popular approach to solving this problem is the heuristic algorithm by Lloyd, also known as the $k$-means algorithm~\cite{Lloyd}. This algorithm alternates between calculating centroids of proto-clusters and reassigning points according to the nearest centroid.
Lloyd's algorithm (and its variants~\cite{Arthur07, Lloyd06}) may, in general, converge to local minima of the $k$-means objective (see for example section 5 of \cite{relax}). Furthermore, the output of Lloyd's algorithm does not indicate how far it is from optimal. 
As such, a slower algorithm that emits such a certificate may be preferable.

Along these lines, convex relaxations provide a framework to attack NP-hard combinatorial problems. This framework is known as the ``relax and round'' paradigm. Given an optimization problem, first relax the feasibility region to a convex set, optimize subject to this larger set, and then round this optimal solution to a point in the original feasibility region. One may seek approximation guarantees in this framework by relating the value of the rounded solution to the value of the optimal solution.
Convex relaxations of clustering problems have been studied~\cite{peng2005new, peng2007approximating}, and a particular relaxation of $k$-means is known to satisfy an approximation ratio~\cite{kanungo02}.

Sometimes, the rounding step of the approximation algorithm is unnecessary because the convex relaxation happens to find a solution that is feasible in the original problem. This phenomenon is known as exact recovery, tightness, or integrality of the convex relaxation. Note that when exact recovery occurs, the algorithm not only provides a solution, but also a certificate of its optimality, thanks to convex duality.
This paper focuses on exact recovery under a particular convex relaxation of the $k$-means problem.

\subsection{Integrality of convex relaxations of geometric clustering}

When is a convex relaxation of geometric clustering tight?
This question seems to have first appeared in~\cite{elhamifar2012finding}, where the authors study an LP relaxation of the $k$-median objective (a problem which is similar to $k$-means). That first paper proves tightness of the relaxation provided the set of points $P$ admits a partition into $k$ clusters of equal size, and the separation distance between any two clusters is sufficiently large. 
Later on, \cite{Nellore_Kmedians} studied integrality of another LP relaxation to the $k$-median objective.
This paper introduced a distribution on the input $P$, which we refer to as the stochastic ball model:

\begin{definition}[$(\mathcal D, \gamma, n)$-stochastic ball model]
\label{stochastic_balls}
Let $\{\gamma_a\}_{a=1}^k$ be ball centers in $\RR^m$. For each $a$, draw iid vectors $\{r_{a,i}\}_{i=1}^n$ from some rotation-invariant distribution $\mathcal D$ supported on the unit ball. The points from cluster $a$ are then taken to be $x_{a,i}:= r_{a,i} + \gamma_a$. 
\end{definition}

Table~\ref{table} summarizes the state of the art for recovery guarantees under the stochastic ball model. 
In \cite{Nellore_Kmedians}, it was shown that the LP relaxation of $k$-medians will, with high probability, recover clusters drawn from the stochastic ball model provided the smallest distance between ball centers is $\Delta \geq 3.75$.  Note that exact recovery only makes sense for $\Delta>2$ (i.e., when the balls are disjoint). Once $\Delta \geq 4$, any two points within a particular cluster are closer to each other than any two points from different clusters, and so in this regime, cluster recovery follows from a simple thresholding.

\begin{table}
\begin{center}
\begin{tabular}{llll} \hline
    {\textbf{Method}} & {\textbf{Sufficient Condition}} & {\textbf{Optimal?}} & {\textbf{Reference}} \T\B \\ \hline\hline
    Thresholding  & $\Delta\geq4$ & Yes & (simple exercise) \T\B \\ \hline
    $k$-medians LP  & $\Delta\geq4$  & No & Theorem~2 in \cite{elhamifar2012finding} \T \\
      & $\Delta\geq3.75$  & No & Theorem~1 in \cite{Nellore_Kmedians} \\
      & $\Delta>2$  & Yes &  Theorem~1 in \cite{relax} \B \\ \hline
    $k$-means LP  & $\Delta\geq4$   & Yes & Theorem~9 in \cite{relax} \T \B \\ \hline
    $k$-means SDP  & $\Delta\geq2\sqrt2(1+ 1/\sqrt m)$  & No  & Theorem~3 in \cite{relax} \T\\
      & $\Delta>2$ & Yes*  & Conjecture~4 in \cite{relax}   \\
      & $\Delta\geq2+k^2\operatorname{Cond}(\gamma)/m$  & No*  & Theorem~\ref{main_theorem} \B   \\ \hline
\end{tabular}
\end{center}
\caption{\label{table} Summary of cluster recovery guarantees under the stochastic ball model.
The second column reports sufficient separation between ball centers in order for the corresponding method to provably give exact recovery with high probability.
(*) We report whether these bounds are optimal under the assumption of Conjecture~4 in \cite{relax}.
}
\end{table}

For the $k$-means problem, \cite{relax} provides an SDP relaxation and demonstrates exact recovery in the regime $\Delta>2\sqrt2(1+ 1/\sqrt m)$, where $m$ is the dimension of the Euclidean space. That work also conjectures that the result holds for optimal separation $\Delta> 2$.
The present work demonstrates tightness given near-optimal separation:

\begin{theorem}[Main result]
\label{main_theorem}
The $k$-means SDP relaxation \eqref{eq.kmeansSDP} from \cite{relax} recovers the planted clusters in the $(\mathcal D, \gamma,n)$-stochastic ball model with probability $1-e^{-\Omega_{\mathcal D, \gamma}(n)}$ provided 
$$\Delta >2 + \frac{k^2}{m}\operatorname{Cond}(\gamma),$$
where
$$\operatorname{Cond}(\gamma):= \frac{\max_{a,b\in\{1,\ldots,k\}, a\neq b} \|\gamma_a-\gamma_b\|^2}{\min_{a,b\in\{1,\ldots,k\}, a\neq b} \|\gamma_a-\gamma_b\|^2}$$ 
\end{theorem}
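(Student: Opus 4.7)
The approach is the standard dual-certificate method for SDP tightness. Let $N:=kn$ and let
\begin{equation*}
X^\star := \sum_{a=1}^k \tfrac{1}{n}\,\mathbf 1_{A_a}\mathbf 1_{A_a}^\top
\end{equation*}
be the indicator matrix of the planted partition $\{A_a\}_{a=1}^k$, viewed as a feasible point of the $k$-means SDP. I first write out the Lagrangian dual of the Peng--Wei-style relaxation, introducing multipliers $z\in\RR$ for the trace constraint, $\alpha\in\RR^N$ for the row-sum constraint $X\mathbf 1=\mathbf 1$, entrywise nonnegative $\beta\in\RR^{N\times N}$ for $X\ge 0$, and PSD $Q\in\RR^{N\times N}$ for $X\succeq 0$. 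The KKT conditions for $X^\star$ to be optimal reduce to the stationarity equation
\begin{equation*}
Q \;=\; D - zI - \tfrac12\bigl(\mathbf 1\alpha^\top + \alpha\mathbf 1^\top\bigr) - \beta,
\end{equation*}
the PSD/sign conditions $Q\succeq 0$ and $\beta\ge 0$, and the complementary-slackness conditions $Q\mathbf 1_{A_a}=0$ for every $a$ and $\beta_{ij}=0$ whenever $i,j$ share a cluster. The plan is to exhibit an explicit certificate $(z,\alpha,\beta,Q)$ satisfying all of these with probability $1-e^{-\Omega(n)}$.

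Next I split the stationarity equation into the $k$ diagonal blocks and $\binom{k}{2}$ off-diagonal blocks of the partition. On each diagonal block $A_a\times A_a$, the constraint $\beta=0$ holds by complementary slackness, and the requirement $Q\mathbf 1_{A_a}=0$ determines both the restriction of $\alpha$ to $A_a$ and the block of $Q$ indexed by $A_a\times A_a$ in terms of $z$ and the within-cluster squared distances. A direct computation identifies this block, up to an additive scalar, as a PSD multiple of the projection of the centered within-cluster Gram matrix onto $\mathbf 1_{A_a}^\perp$. Standard concentration of the sample covariance of bounded rotation-invariant random vectors in $\RR^m$ gives a positive lower bound on its smallest nonzero eigenvalue, depending only on $\mathcal D$ and $m$, with probability $1-e^{-\Omega(n)}$. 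The remaining off-block degrees of freedom in $Q$ are then fixed, via a Schur-complement argument, so that $Q$ is globally PSD.

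With $Q$ constructed, the off-diagonal blocks of the stationarity equation \emph{define} $\beta$, and the proof reduces to the pointwise check $\beta_{ij}\ge 0$ for $i\in A_a$, $j\in A_b$, $a\ne b$. The resulting formula has the schematic form
\begin{equation*}
\beta_{ij} \;=\; \|x_{a,i}-x_{b,j}\|^2 - \bigl(\tfrac12\alpha_i+\tfrac12\alpha_j+z\bigr) - \bigl\langle x_{a,i}-\bar x_a,\,v_{ab}\bigr\rangle - \bigl\langle x_{b,j}-\bar x_b,\,w_{ab}\bigr\rangle,
\end{equation*}
where $v_{ab},w_{ab}\in\RR^m$ come from the previously-fixed blocks of $Q$. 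The dominant contribution, $\|x_{a,i}-x_{b,j}\|^2\ge(\Delta-2)^2$, comes from center separation; the remaining inner-product corrections are small by rotational invariance of $\mathcal D$, since each is a bounded inner product between a centered unit-ball random vector and a fixed direction, and therefore concentrates at the dimension-dependent rate coming from isoperimetry on the sphere.

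The main obstacle is selecting the block-constant part of $\alpha$ so that $\beta_{ij}\ge 0$ holds simultaneously for \emph{every} pair of clusters, not just one pair at a time. Writing this requirement out produces a $k\times k$ linear system whose coefficient matrix is built from the pairwise inner products of the centroid offsets $\gamma_a-\gamma_b$, and whose conditioning is captured exactly by $\operatorname{Cond}(\gamma)$. Propagating the $O(k^2)$ pairwise correction terms through the inverse of this system, together with the high-dimensional concentration above, yields the claimed separation threshold $\Delta>2+k^2\operatorname{Cond}(\gamma)/m$. A union bound over the $O(n^2)$ entries of $\beta$ and the $O(k)$ block-eigenvalue events preserves the $1-e^{-\Omega(n)}$ failure probability coming from sub-Gaussian tail estimates for points in the unit ball; weak duality then certifies that $X^\star$ is the unique SDP optimum, completing the exact-recovery argument.
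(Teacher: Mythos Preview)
Your high-level framework (dual certificate, complementary slackness forcing $\beta^{(a,a)}=0$ and $Q\mathbf 1_{A_a}=0$, then choosing the remaining free parameters) is the same as the paper's. But the paper makes the \emph{opposite} choice about which of the two hard constraints to build in and which to verify: it constructs $\beta$ (equivalently $B=\tfrac12\beta$) to be entrywise nonnegative \emph{by design}, taking each off-diagonal block $B^{(a,b)}$ to be a rank-one outer product of two nonnegative vectors $u_{(a,b)}u_{(b,a)}^\top/\rho_{(b,a)}$, with $u_{(a,b)}:=M^{(a,b)}\mathbf 1-z\tfrac{n_a+n_b}{2n_a}\mathbf 1$ and $z$ chosen as the largest scalar keeping all $u_{(a,b)}\ge 0$. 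The only thing left to check is then the single spectral condition $\|P_{\Lambda^\perp}(M-B)P_{\Lambda^\perp}\|\le z$, which is handled by bounding $\|\Psi\|$ and the sum $\sum_{a<b}\|P_{1^\perp}M^{(a,b)}\mathbf 1\|\,\|P_{1^\perp}M^{(b,a)}\mathbf 1\|/\rho_{(a,b)}$. This avoids the $O(n^2)$ entrywise $\beta_{ij}\ge 0$ checks altogether, and $\operatorname{Cond}(\gamma)$ enters simply through $\sum_{a<b}\Delta_{ab}^2/\Delta^2\le\binom{k}{2}\operatorname{Cond}(\gamma)$, not via the inverse of a $k\times k$ system.

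Your route, as written, has three concrete gaps. First, the diagonal block of $Q$ on $\mathbf 1_{A_a}^\perp$ is $zI-2\Psi_a^\top\Psi_a$ (paper's sign convention), not a positive multiple of the centered Gram matrix; what you need is an \emph{upper} bound on $\|\Psi_a\|_{2\to 2}^2$ so that $z$ dominates it, not a lower bound on the smallest nonzero eigenvalue. Second, the ``dominant contribution'' $\|x_{a,i}-x_{b,j}\|^2\ge(\Delta-2)^2$ gives the wrong scale: near the target threshold $\Delta-2=O(k^2/m)$ this is $O(k^4/m^2)$, far too small to absorb $O(k^2/m)$ correction terms. The paper instead shows that the relevant combination $M^{(a,b)}\mathbf 1$ (which already folds in the $\alpha$-terms) satisfies $\min(M^{(a,b)}\mathbf 1)\gtrsim n\Delta(\Delta-2)$, and it is this linear-in-$(\Delta-2)$ quantity that is compared to the $O(nk^2\operatorname{Cond}(\gamma)/m)$ left-hand side. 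Third, ``fix the off-diagonal blocks of $Q$ via a Schur-complement argument'' is not a construction; the Schur complement is a test for positive semidefiniteness, not a recipe that determines $Q^{(a,b)}$, and without a concrete choice you cannot evaluate the induced $\beta$ or derive the stated threshold.
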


Our proof of Theorem~\ref{main_theorem} follows the strategy of \cite{relax}, namely, to identify a dual certificate of the SDP, and then show that this certificate exists for a suitable regime of $\Delta$'s under the stochastic ball model.
Figure~\ref{phase_transitions} provides numerical simulations that illustrate the empirical performance of our dual certificate in comparison with the one provided in \cite{relax}. 

\begin{figure} 
\includegraphics[height=0.47\textheight]{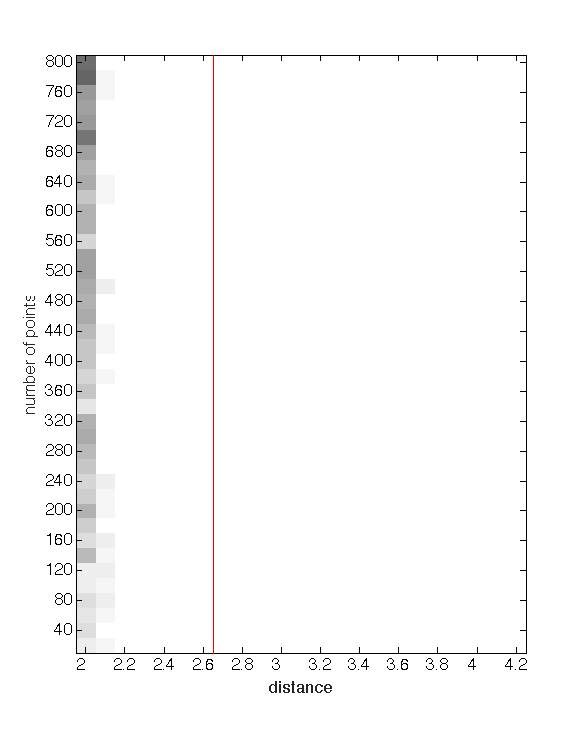}
\includegraphics[height=0.47\textheight]{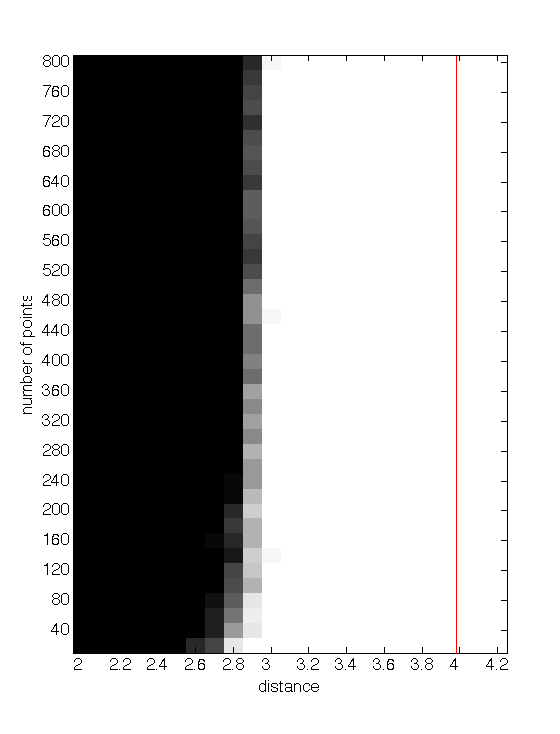}
\caption{\label{phase_transitions} Frequency of successful certification for the SDP relaxation of k-means (left: our certificate, right: certificate from \cite{relax}).  Lighter color represents higher probability of success. Area to the right of the vertical line corresponds to the regime where exact recovery was proven with high probability.
We generate 30 random instances of our $(\mathcal D, \gamma, n)$-stochastic ball model for each given distance and number of points.  Here we take $\mathcal D$ as the uniform distribution in the unit ball in $\mathbb R^6$.
}
\end{figure}

This paper is organized as follows. Section \ref{sec:background} formulates a semidefinite relaxation of $k$-means and derives its dual. Section \ref{sec:deterministic} provides deterministic conditions that guarantee the solution of the relaxation provided is feasible for $k$-means. Section \ref{sec:model} proves Theorem \ref{main_theorem}, showing the deterministic conditions are satisfied with high probability under the $(\mathcal D, \gamma, n)$-stochastic ball model.
\section{Background} \label{sec:background}

Given $P\subseteq\mathbb{R}^m$ with $|P|=N$, we seek to solve the $k$-means problem \eqref{kmeans} which is well-known to be equivalent to
\begin{alignat}{2}
\label{eq.kmeans}
& \text{minimize}  &       & \sum_{t=1}^k\frac{1}{|A_t|}\sum_{x_i,x_j\in A_t}\|x_i-x_j\|_2^2 \\
\nonumber
& \text{subject to}& \quad & A_1\sqcup\cdots\sqcup A_k=P
\end{alignat}
This problem is NP-hard in general~\cite{Aloise}.
However, many instances of this problem can be solved by relaxing to the following SDP:
\begin{alignat}{2}
\label{eq.kmeansSDP}
& \text{maximize}  &       & -\operatorname{Tr}(DX) \\
\nonumber
& \text{subject to}& \quad & 
\begin{aligned}[t]
\operatorname{Tr}(X)&=k\\
X1&=1\\
X&\geq0\\
X&\succeq0
\end{aligned}
\end{alignat}
Here, $D$ denotes the matrix whose $(i,j)$th entry is $\|x_i-x_j\|_2^2$.
Observe that \eqref{eq.kmeansSDP} is indeed a relaxation of \eqref{eq.kmeans}:
Let $1_A$ denote the indicator function of $A\subseteq\{1,\ldots,N\}$.
Then taking $X:=\sum_{t=1}^k\frac{1}{|A_t|}1_{A_t}1_{A_t}^\top$ gives that
\begin{align*}
\operatorname{Tr}(DX)
&=\operatorname{Tr}\bigg(D\sum_{t=1}^k\frac{1}{|A_t|}1_{A_t}1_{A_t}^\top\bigg)
=\sum_{t=1}^k\frac{1}{|A_t|}\operatorname{Tr}(D1_{A_t}1_{A_t}^\top)
=\sum_{t=1}^k\frac{1}{|A_t|}1_{A_t}^\top D1_{A_t}\\
&=\sum_{t=1}^k\frac{1}{|A_t|}\sum_{x_i,x_j\in A_t}\|x_i-x_j\|_2^2.
\end{align*}
Also, $X$ is clearly feasible in \eqref{eq.kmeansSDP}, and so we conclude that the SDP is a relaxation of the $k$-means problem \eqref{eq.kmeans}.

To derive the dual of \eqref{eq.kmeansSDP}, we will leverage the general setting from cone programming~\cite{Mixon:15}, namely, that given closed convex cones $K$ and $L$, the dual of
\begin{alignat}{2}
\label{eq.primal}
& \text{maximize}  &       & \langle c,x\rangle \\
\nonumber
& \text{subject to}& \quad & 
\begin{aligned}[t]
b-Ax&\in L\\
x&\in K
\end{aligned}
\end{alignat}
is given by
\begin{alignat}{2}
\label{eq.dual}
& \text{minimize}  &       & \langle b,y\rangle \\
\nonumber
& \text{subject to}& \quad & 
\begin{aligned}[t]
A^* y-c&\in K^*\\
y&\in L^*
\end{aligned}
\end{alignat}
where $A^*$ denotes the adjoint of $A$, while $K^*$ and $L^*$ denote the dual cones of $K$ and $L$, respectively.
In our case, $c=-D$, $x=X$, and $K$ is simply the cone of positive semidefinite matrices (as is $K^*$).
Before we determine $L$, we need to interpret the remaining constraints in \eqref{eq.kmeansSDP}.
To this end, we note that $\operatorname{Tr}(X)=k$ is equivalent to $\langle X,I\rangle=k$, $X1=1$ is equivalent to having
\[
\bigg\langle X,\frac{1}{2}(e_i1^\top+1e_i^\top)\bigg\rangle
=1
\qquad
\forall i\in\{1,\ldots,N\},
\]
and $X\geq0$ is equivalent to having
\[
\bigg\langle X,\frac{1}{2}(e_ie_j^\top+e_je_i^\top)\bigg\rangle
\geq0
\qquad
\forall i,j\in\{1,\ldots,N\},~i\leq j.
\]
(These last two equivalences exploit the fact that $X$ is symmetric.)
As such, we can express the remaining constraints in \eqref{eq.kmeansSDP} using a linear operator $A$ that sends any matrix $X$ to its inner products with $I$, $\{\frac{1}{2}(e_i1^\top+1e_i^\top)\}_{i=1}^N$, and $\{\frac{1}{2}(e_ie_j^\top+e_je_i^\top)\}_{i,j=1,i\leq j}^N$.
The remaining constraints in \eqref{eq.kmeansSDP} are equivalent to having $b-Ax\in L$, where $b=k\oplus 1\oplus 0$ and $L=0\oplus0\oplus\mathbb{R}_{\geq0}^{N(N+1)/2}$.
Writing $y=z\oplus \alpha\oplus(-\beta)$, the dual of \eqref{eq.kmeansSDP} is then given by
\begin{alignat}{2}
\label{eq.kmeansSDPdual}
& \text{minimize}  &       & kz+\sum_{i=1}^N\alpha_i \\
\nonumber
& \text{subject to}& \quad & 
\begin{aligned}[t]
zI+\sum_{i=1}^N\alpha_i\cdot\frac{1}{2}(e_i1^\top+1e_i^\top)-\sum_{i=1}^N\sum_{j=i}^N\beta_{ij}\cdot\frac{1}{2}(e_ie_j^\top+e_je_i^\top)+D&\succeq0\\
\beta&\geq0
\end{aligned}
\end{alignat}

\begin{theorem}[e.g., see~\cite{Mixon:15}]
Suppose the primal program \eqref{eq.primal} and dual program \eqref{eq.dual} are feasible and bounded.
\begin{itemize}
\item[(a)]
\textbf{Strong duality.}
The primal program \eqref{eq.primal} has optimal value $\operatorname{val}$ if and only if the dual program \eqref{eq.dual} has bounded optimal value $\operatorname{val}$.
\item[(b)]
\textbf{Complementary slackness.}
The decision variables $x$ and $y$ are optimal in \eqref{eq.primal} and \eqref{eq.dual}, respectively, if and only if
\[
\langle A^* y-c,x\rangle=0=\langle y,b-Ax\rangle.
\]
\end{itemize}
\end{theorem}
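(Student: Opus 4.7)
The plan is to prove weak duality first by a direct calculation, which will both give the easy inequality between the primal and dual values and make the complementary slackness statement essentially a triviality once strong duality is in hand. Specifically, for any primal feasible $x$ (so $x \in K$ and $b - Ax \in L$) and any dual feasible $y$ (so $y \in L^*$ and $A^* y - c \in K^*$), I would write
\begin{equation*}
\langle b, y \rangle - \langle c, x \rangle
= \langle b - Ax, y \rangle + \langle Ax, y \rangle - \langle c, x \rangle
= \langle b - Ax, y \rangle + \langle x, A^* y - c \rangle,
\end{equation*}
and then observe that both summands on the right are nonnegative by the definition of the dual cones $L^*$ and $K^*$. This immediately gives $\langle c, x\rangle \le \langle b, y\rangle$ for every feasible pair.

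With weak duality in hand, complementary slackness (part (b)) is immediate: for feasible $x, y$, equality $\langle c, x \rangle = \langle b, y \rangle$ holds if and only if both nonnegative terms $\langle b - Ax, y \rangle$ and $\langle x, A^* y - c \rangle$ vanish. Combined with part (a), this characterizes optimal pairs exactly as stated.

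The main work is then part (a), strong duality. The plan is to show $\operatorname{val}(\text{primal}) = \operatorname{val}(\text{dual})$ via a hyperplane separation argument in the product space $\mathbb{R} \oplus (\text{range of } A)$. Concretely, form the convex set $S := \{(\langle c, x\rangle - t,\, Ax + \ell) : x \in K,\, t \geq 0,\, \ell \in L\}$ and note that a point of the form $(\operatorname{val}(\text{primal}) + \varepsilon,\, b)$ fails to lie in $S$ for any $\varepsilon > 0$. Separating this point from $S$ by a hyperplane produces a vector in $\mathbb{R} \oplus (\text{range of }A)^*$ whose components, after normalizing the first coordinate, yield a dual feasible $y$ achieving value at most $\operatorname{val}(\text{primal}) + \varepsilon$. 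Letting $\varepsilon \to 0$ and combining with weak duality closes the gap.

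The main obstacle I anticipate is the standard subtlety in conic duality: feasibility and boundedness of both programs do not, by themselves, guarantee zero duality gap in full generality — one usually needs a constraint qualification (Slater's condition or closedness of the image cone) to ensure the separating hyperplane argument goes through without the separator degenerating. I would handle this by invoking the ambient hypotheses as formulated in \cite{Mixon:15}, where the relevant closedness/attainment assumption is built into the statement of ``feasible and bounded,'' and by verifying attainment of the optimum on at least one side so the separation can be carried out strictly. Once this technicality is dispatched, both parts follow from the short computation above.
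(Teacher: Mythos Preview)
The paper does not prove this theorem at all: it is stated with the attribution ``e.g., see~\cite{Mixon:15}'' and used as a black box, so there is no proof in the paper to compare against. Your sketch is a perfectly standard and essentially correct outline of the textbook argument---weak duality by the one-line computation you wrote, complementary slackness as an immediate corollary, and strong duality via a separating-hyperplane argument---and you are right to flag that ``feasible and bounded'' alone is not quite enough for zero duality gap without an additional regularity assumption (Slater, closedness of the image cone, or similar). Since the paper is merely quoting the result rather than proving it, your proposal goes well beyond what the paper itself does; if you were to include it, the only thing to tighten is making explicit which constraint qualification you are invoking so that the separation argument does not degenerate.
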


For notational simplicity, we organize indices according to clusters.
For example, from this point forward, $1_a$ denotes the indicator function of the $a$th cluster.
Also, we shuffle the rows and columns of $X$ and $D$ into blocks that correspond to clusters; for example, the $(i,j)$th entry of the $(a,b)$th block of $D$ is given by $D^{(a,b)}_{ij}$.
We also index $\alpha$ in terms of clusters; for example, the $i$th entry of the $a$th block of $\alpha$ is denoted $\alpha_{a,i}$.
For $\beta$, we identify
\[
\beta:=\sum_{i=1}^N\sum_{j=i}^N\beta_{ij}\cdot\frac{1}{2}(e_ie_j^\top+e_je_i^\top).
\]
Indeed, when $i\leq j$, the $(i,j)$th entry of $\beta$ is $\beta_{ij}$.
From this point forward, we consider $\beta$ as having its rows and columns shuffled according to clusters, so that the $(i,j)$th entry of the $(a,b)$th block is $\beta^{(a,b)}_{ij}$.

\begin{theorem}
\label{thm.integral optimality}
Take $X:=\sum_{a=1}^k\frac{1}{n_a}1_a1_a^\top$, where $n_a$ denotes the number of points in cluster $t$.
The following are equivalent:
\begin{itemize}
\item[(a)]
$X$ is a solution to the SDP relaxation \eqref{eq.kmeansSDP}.
\item[(b)]
Every solution to the dual SDP \eqref{eq.kmeansSDPdual} satisfies
\[
Q^{(a,a)}1=0,
\qquad
\beta^{(a,a)}=0
\qquad
\forall a\in\{1,\ldots,k\},
\]
where $Q:=A^*y-c$.
\item[(c)]
Every solution to the dual SDP \eqref{eq.kmeansSDPdual} satisfies
\[
\alpha_{a,r}
=-\frac{1}{n_a}z+\frac{1}{n_a^2}1^\top D^{(a,a)}1-\frac{2}{n_a}e_r^\top D^{(a,a)}1
\qquad
\forall a\in\{1,\ldots,k\},~r\in a.
\]
\end{itemize}
\end{theorem}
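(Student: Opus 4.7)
The plan is to establish the three-way equivalence as a cycle: (a)$\Rightarrow$(b) by complementary slackness, (b)$\Rightarrow$(c) by solving a small linear system, and (c)$\Rightarrow$(a) by matching primal and dual objective values. The first step is to verify that the proposed $X=\sum_{a}\tfrac{1}{n_a}1_a1_a^\top$ is primal feasible in \eqref{eq.kmeansSDP} (the four constraints are immediate from the rank-$k$ outer-product form) and to note that strong duality applies, so that any dual optimal $y$ can be paired with $X$ via complementary slackness.

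For (a)$\Rightarrow$(b), suppose $X$ is primal optimal and let $y$ be any dual optimal solution. Complementary slackness on the entrywise constraint $X\ge 0$ yields $\beta_{ij}X_{ij}=0$ for every $i,j$; since $X^{(a,a)}=\tfrac{1}{n_a}11^\top$ has strictly positive entries, this forces $\beta^{(a,a)}=0$ for all $a$. Complementary slackness on the PSD cone gives $\operatorname{Tr}(QX)=0$ with $Q=A^{*}y-c\succeq 0$, hence $QX=0$; writing $X=\sum_{a}\tfrac{1}{n_a}1_a1_a^\top$ turns this into $Q1_a=0$ for each $a$, and reading off the $a$th block gives $Q^{(a,a)}1=0$.

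For (b)$\Rightarrow$(c), expanding the $(a,a)$ block of $Q$ gives
\[
Q^{(a,a)} \;=\; zI \;+\; \tfrac{1}{2}\bigl(\alpha_{a,\cdot}\,1^\top + 1\,\alpha_{a,\cdot}^\top\bigr) \;-\; \beta^{(a,a)} \;+\; D^{(a,a)};
\]
plugging in $\beta^{(a,a)}=0$ and reading off the $r$th entry of $Q^{(a,a)}1=0$ yields a linear equation in the unknowns $\{\alpha_{a,s}\}_{s\in a}$. Summing that equation over $r\in a$ solves for $\sum_{s\in a}\alpha_{a,s}$, and back-substituting produces the formula in (c). For (c)$\Rightarrow$(a), the same summation applied to (c) shows $\sum_{s\in a}\alpha_{a,s}=-z-\tfrac{1}{n_a}1^\top D^{(a,a)}1$, so the dual objective $kz+\sum_i\alpha_i$ collapses to $-\operatorname{Tr}(DX)$, which is the primal objective at $X$; since $X$ and $y$ are respectively primal and dual feasible with matching values, $X$ is primal optimal.

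The implications (b)$\Rightarrow$(c) and (c)$\Rightarrow$(a) are essentially bookkeeping. The main obstacle is the complementary slackness step (a)$\Rightarrow$(b): one must simultaneously invoke entrywise slackness (with $\beta\ge 0$) and PSD slackness (with $Q,X\succeq 0$), use the strict positivity of the diagonal blocks of $X$ to kill $\beta^{(a,a)}$, and use the outer-product form of $X$ to upgrade $\operatorname{Tr}(QX)=0$ into the diagonal-block identity $Q^{(a,a)}1=0$.
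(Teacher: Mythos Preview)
Your proposal is correct and follows essentially the same route as the paper's proof: complementary slackness for (a)$\Rightarrow$(b), solving the linear system on $Q^{(a,a)}1=0$ for (b)$\Rightarrow$(c), and matching the dual objective to $-\operatorname{Tr}(DX)$ for (c)$\Rightarrow$(a). The only cosmetic difference is that the paper establishes (a)$\Leftrightarrow$(b) in one stroke (using $\sum_a \tfrac{1}{n_a}1_a^\top Q 1_a=0$ with $Q\succeq 0$), whereas you argue (a)$\Rightarrow$(b) via $QX=0$ and close the loop through (c); the underlying linear-algebra facts are the same.
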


\begin{proof}
(a)$\Leftrightarrow$(b):
By complementary slackness, (a) is equivalent to having both
\begin{equation}
\label{eq.compslack1}
\langle A^* y-c,X\rangle=0
\end{equation}
and
\begin{equation}
\label{eq.compslack2}
\langle y,b-A(X)\rangle=0.
\end{equation}
Since $Q\succeq0$, we have
\[
\langle A^* y-c,X\rangle
=\langle Q,X\rangle
=\bigg\langle Q,\sum_{t=1}^k\frac{1}{n_t}1_t1_t^\top\bigg\rangle
=\sum_{t=1}^k\frac{1}{n_t}1_t^\top Q1_t
\geq0,
\]
with equality if and only if $Q1_a=0$ for every $a\in\{1,\ldots,k\}$.
Next, we recall that $y=z\oplus\alpha\oplus(-\beta)$, $b-A(X)\in L=0\oplus0\oplus\mathbb{R}_{\geq0}^{N(N+1)/2}$, and $b=k\oplus 1\oplus 0$.
As such, \eqref{eq.compslack2} is equivalent to $\beta$ having disjoint support with $\{\langle X,\frac{1}{2}(e_ie_j^\top+e_je_i^\top)\rangle\}_{i,j=1,i\leq j}^N$, i.e., $\beta^{(a,a)}=0$ for every cluster $a$.

(b)$\Rightarrow$(c):
Take any solution to the dual SDP \eqref{eq.kmeansSDPdual}, and note that
\begin{align*}
Q^{(a,a)}
&=zI+\bigg(\sum_{t=1}^k\sum_{i\in t}\alpha_{t,i}\cdot\frac{1}{2}(e_{t,i}1^\top+1e_{t,i}^\top)\bigg)^{(a,a)}-\beta^{(a,a)}+D^{(a,a)}\\
&=zI+\sum_{i\in a}\alpha_{a,i}\cdot\frac{1}{2}(e_i1^\top+1e_i^\top)+D^{(a,a)},
\end{align*}
where the $1$ vectors in the second line are $n_a$-dimensional (instead of $N$-dimensional, as in the first line), and similarly for $e_i$ (instead of $e_{t,i}$).
We now consider each entry of $Q^{(a,a)}1$, which is zero by assumption:
\begin{align}
0
\nonumber
&=e_r^\top Q^{(a,a)}1\\
\nonumber
&=e_r^\top \bigg(zI+\sum_{i\in a}\alpha_{a,i}\cdot\frac{1}{2}(e_i1^\top+1e_i^\top)+D^{(a,a)}\bigg)1\\
\nonumber
&=z+\sum_{i\in a}\alpha_{a,i}\cdot\frac{1}{2}(e_r^\top e_i1^\top1+e_r^\top 1e_i^\top1)+e_r^\top D^{(a,a)}1\\
\label{eq.linear system to solve}
&=z+\sum_{i\in a}\alpha_{a,i}\cdot\frac{1}{2}(n_a\delta_{ir}+1)+e_r^\top D^{(a,a)}1.
\end{align}
As one might expect, these $n_a$ linear equations determine the variables $\{\alpha_{a,i}\}_{i\in a}$.
To solve this system, we first observe
\begin{align*}
0
&=1^\top Q^{(a,a)}1\\
&=1^\top\bigg(zI+\sum_{i\in a}\alpha_{a,i}\cdot\frac{1}{2}(e_i1^\top+1e_i^\top)+D^{(a,a)}\bigg)1\\
&=n_az+\sum_{i\in a}\alpha_{a,i}\cdot\frac{1}{2}(1^\top e_i1^\top1+1^\top 1e_i^\top1)+1^\top D^{(a,a)}1\\
&=n_az+n_a\sum_{i\in a} \alpha_{a,i}+1^\top D^{(a,a)}1,
\end{align*}
and so rearranging gives
\[
\sum_{i\in a} \alpha_{a,i}
=-z-\frac{1}{n_a}1^\top D^{(a,a)}1.
\]
We use this identity to continue \eqref{eq.linear system to solve}:
\begin{align*}
0
&=z+\sum_{i\in a}\alpha_{a,i}\cdot\frac{1}{2}(n_a\delta_{ir}+1)+e_r^\top D^{(a,a)}1\\
&=z+\frac{n_a}{2}\alpha_{a,r}+\frac{1}{2}\sum_{i\in a}\alpha_{a,i}+e_r^\top D^{(a,a)}1\\
&=z+\frac{n_a}{2}\alpha_{a,r}+\frac{1}{2}\bigg(-z-\frac{1}{n_a}1^\top D^{(a,a)}1\bigg)+e_r^\top D^{(a,a)}1,
\end{align*}
and rearranging yields the desired formula for $\alpha_{a,r}$.

(c)$\Rightarrow$(a):
Take any solution to the dual SDP \eqref{eq.kmeansSDPdual}.
Then by assumption, the dual objective at this point is given by
\begin{align*}
kz+\sum_{t=1}^k\sum_{i\in t}\alpha_{t,i}
&=kz+\sum_{t=1}^k\sum_{i\in t}\bigg(-\frac{1}{n_t}z+\frac{1}{n_t^2}1^\top D^{(t,t)}1-\frac{2}{n_t}e_i^\top D^{(t,t)}1\bigg)\\
&=-\sum_{t=1}^k\frac{1}{n_t}1^\top D^{(t,t)}1\\
&=-\operatorname{Tr}(DX),
\end{align*}
i.e., the primal objective \eqref{eq.kmeansSDP} evaluated at $X$.
Since $X$ is feasible in the primal SDP, we conclude that $X$ is optimal by strong duality.
\end{proof}

\section{Finding a dual certificate} \label{sec:deterministic}

The goal is to certify when the SDP-optimal solution is integral.
In this event, Theorem~\ref{thm.integral optimality} characterizes acceptable dual certificates $(z,\alpha,\beta)$, but this information fails to uniquely determine a certificate.
In this section, we will motivate the application of additional constraints on dual certificates so as to identify certifiable instances.

We start by reviewing the characterization of dual certificates $(z,\alpha,\beta)$ provided in Theorem~\ref{thm.integral optimality}.
In particular, $\alpha$ is completely determined by $z$, and so $z$ and $\beta$ are the only remaining free variables.
Indeed, for every $a,b\in\{1,\ldots,k\}$, we have
\begin{align*}
&\bigg(\sum_{t=1}^k\sum_{i\in t}\alpha_{t,i}\cdot\frac{1}{2}(e_{t,i}1^\top+1e_{t,i}^\top)\bigg)^{(a,b)}\\
&\qquad=\sum_{i\in a}\alpha_{a,i}\cdot\frac{1}{2}e_i1^\top+\sum_{j\in b}\alpha_{b,j}\cdot\frac{1}{2}1e_j^\top\\
&\qquad=-\frac{1}{2}\bigg(\frac{1}{n_a}+\frac{1}{n_b}\bigg)z+\sum_{i\in a}\bigg(\frac{1}{n_a^2}1^\top D^{(a,a)}1-\frac{2}{n_a}e_i^\top D^{(a,a)}1\bigg)\frac{1}{2}e_i1^\top\\
&\qquad\qquad+\sum_{j\in b}\bigg(\frac{1}{n_b^2}1^\top D^{(b,b)}1-\frac{2}{n_b}e_j^\top D^{(b,b)}1\bigg)\frac{1}{2}1e_j^\top,
\end{align*}
and so since
\[
Q
=zI+\sum_{t=1}^k\sum_{i\in t}\alpha_{t,i}\cdot\frac{1}{2}(e_{t,i}1^\top+1e_{t,i}^\top)-\frac{1}{2}\beta+D,
\]
we may write $Q=z(I-E)+M-B$, where
\begin{align}
\label{eq.definition of E}
E^{(a,b)}
&:=\frac{1}{2}\bigg(\frac{1}{n_a}+\frac{1}{n_b}\bigg)11^\top\\
\nonumber
M^{(a,b)}
&:=D^{(a,b)}+\sum_{i\in a}\bigg(\frac{1}{n_a^2}1^\top D^{(a,a)}1-\frac{2}{n_a}e_i^\top D^{(a,a)}1\bigg)\frac{1}{2}e_i1^\top\\
\label{eq.definition of M}
&\qquad+\sum_{j\in b}\bigg(\frac{1}{n_b^2}1^\top D^{(b,b)}1-\frac{2}{n_b}e_j^\top D^{(b,b)}1\bigg)\frac{1}{2}1e_j^\top\\
\nonumber
B^{(a,b)}
&=\frac{1}{2}\beta^{(a,b)}
\end{align}
for every $a,b\in\{1,\ldots,k\}$.
The following is one way to formulate our task:
Given $D$ and a clustering (which in turn determines $E$ and $M$), determine whether there exist feasible $z$ and $B$ such that $Q\succeq0$; here, feasibility only requires $B$ to be symmetric with nonnegative entries and $B^{(a,a)}=0$ for every $a\in\{1,\ldots,k\}$.
We opt for a slightly more modest goal:
Find $z=z(D)$ and $B=B(D)$ such that $Q\succeq0$ for a large family of $D$'s.

Before determining $z$ and $B$, we first analyze $E$:

\begin{lemma}
\label{lemma.eigen of E}
Let $E$ be the matrix defined by \eqref{eq.definition of E}.
Then $\operatorname{rank}(E)\in\{1,2\}$.
The eigenvalue of largest magnitude is $\lambda\geq k$, and when $\operatorname{rank}(E)=2$, the other nonzero eigenvalue of $E$ is negative.
The eigenvectors corresponding to nonzero eigenvalues lie in the span of $\{1_a\}_{a=1}^k$.
\end{lemma}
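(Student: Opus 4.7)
The plan is to rewrite $E$ in outer-product form, which will immediately reveal its rank, the location of its nonzero eigenvectors, and enough data (trace and determinant on a two-dimensional invariant subspace) to pin down the signs and magnitudes of its nonzero eigenvalues.

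First I would introduce the vector $v:=\sum_{a=1}^{k}\tfrac{1}{n_a}1_a\in\RR^N$ (where $1_a$ is now the $N$-dimensional indicator of cluster $a$) and observe that the block formula \eqref{eq.definition of E} can be reorganized as
\[
E \;=\; \tfrac{1}{2}\,v\,1^\top \;+\; \tfrac{1}{2}\,1\,v^\top,
\]
since the $(a,b)$ block of $v\,1^\top$ is $\tfrac{1}{n_a}1_{n_a}1_{n_b}^\top$ and similarly for $1\,v^\top$. This expression is manifestly a sum of two rank-one matrices, so $\operatorname{rank}(E)\in\{1,2\}$; moreover both $v$ and $1$ lie in $\operatorname{span}\{1_a\}_{a=1}^{k}$, so the column space of $E$, and hence every eigenvector for a nonzero eigenvalue, lies in this span as well. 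The rank drops to $1$ exactly when $v$ and $1$ are parallel, i.e.\ when all the $n_a$ are equal; in that case $E=\tfrac{1}{n}11^\top$ and its single nonzero eigenvalue is $N/n=k$, settling the lemma in this case.

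In the remaining (generic) case, $\operatorname{span}\{v,1\}$ is a two-dimensional $E$-invariant subspace. I would compute the matrix of $E$ restricted to this subspace in the basis $(v,1)$ using the inner products $1^\top 1 = N$, $1^\top v = v^\top 1 = \sum_a \tfrac{n_a}{n_a}=k$, and $v^\top v = \sum_a \tfrac{1}{n_a}$; this yields a $2\times 2$ matrix whose trace equals $k$ and whose determinant equals $\tfrac{1}{4}\bigl(k^2 - N\|v\|^2\bigr)$. By Cauchy--Schwarz, $N\|v\|^2 = \bigl(\sum_a n_a\bigr)\bigl(\sum_a \tfrac{1}{n_a}\bigr)\ge k^2$, with strict inequality precisely when the $n_a$ are not all equal, so the determinant is strictly negative in this case.

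From trace $k$ and negative determinant I conclude that the two nonzero eigenvalues $\lambda_+>0>\lambda_-$ of $E$ are real with opposite signs, and $\lambda_+ = k - \lambda_- \ge k$, which also gives $|\lambda_+|>|\lambda_-|$, so $\lambda_+$ is the eigenvalue of largest magnitude. Combined with the rank-one case, this establishes all three conclusions of the lemma. I expect the main (minor) pitfall to be bookkeeping around the rank-$1$ versus rank-$2$ dichotomy and verifying the Cauchy--Schwarz inequality $N\|v\|^2\ge k^2$ in the right direction; everything else is a direct computation on the two-dimensional invariant subspace.
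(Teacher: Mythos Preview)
Your proof is correct and follows essentially the same route as the paper: both write $E=\tfrac{1}{2}v1^\top+\tfrac{1}{2}1v^\top$ to get $\operatorname{rank}(E)\le 2$ and the column-space statement, and both use $\operatorname{Tr}(E)=k$ to control the second eigenvalue. The only difference is in how the bound $\lambda\ge k$ is obtained: the paper evaluates the Rayleigh quotient at $1$, using $1^\top E1=Nk$ to get $\lambda\ge \tfrac{1}{N}1^\top E1=k$ directly, whereas you compute the determinant on the invariant $2$-plane and invoke Cauchy--Schwarz; the paper's step is a bit shorter, but your argument has the minor bonus of making the equality case (all $n_a$ equal) explicit.
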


\begin{proof}
Writing
\begin{equation*}
E
=\sum_{a=1}^k\sum_{b=1}^k\frac{1}{2}\bigg(\frac{1}{n_a}+\frac{1}{n_b}\bigg)1_a1_b^\top
=\frac{1}{2}\bigg(\sum_{a=1}^k\frac{1}{n_a}1_a\bigg)1^\top+\frac{1}{2}1\bigg(\sum_{b=1}^k\frac{1}{n_b}1_b\bigg)^\top,
\end{equation*}
we see that $\operatorname{rank}(E)\in\{1,2\}$, and it is easy to calculate $1^\top E1=Nk$ and $\operatorname{Tr}(E)=k$.
Observe that
\[
\lambda
=\sup_{\substack{x\in\mathbb{R}^N\\\|x\|=1}}x^\top Ex
\geq \frac{1}{N}1^\top E1
=k,
\]
and combining with $\operatorname{rank}(E)\leq2$ and $\operatorname{Tr}(E)=k$ then implies that the other nonzero eigenvalue (if there is one) is negative.
Finally, any eigenvector of $E$ with a nonzero eigenvalue necessarily lies in the column space of $E$, which is a subspace of $\operatorname{span}\{1_a\}_{a=1}^k$ by the definition of $E$.
\end{proof}

When finding $z$ and $B$ such that $Q=z(I-E)+M-B\succeq0$ it will be useful that $I-E$ has only one negative eigenvalue to correct.
Let $v$ denote the corresponding eigenvector.
Then we will pick $B$ so that $v$ is also an eigenvector of $M-B$.
Since we want $Q\succeq0$ for as many instances of $D$ as possible, we will then pick $z$ as large as possible, thereby sending $v$ to the nullspace of $Q$.
Unfortunately, the authors found that this constraint fails to uniquely determine $B$ in general.
Instead, we impose a stronger constraint:
\[
Q1_a=0
\qquad
\forall a\in\{1,\ldots,k\}.
\]
(This constraint implies $Qv=0$ by Lemma~\ref{lemma.eigen of E}.)
To see the implications of this constraint, note that we already necessarily have
\[
(Q1_a)_a
=\Big((z(I-E)+M-B)1_a\Big)_a
=z(I-E^{(a,a)})1+M^{(a,a)}1-B^{(a,a)}1
=z\bigg(1-\frac{1}{n_a}11^\top1\bigg)
=0,
\]
and so it remains to impose
\begin{align}
\nonumber
0
=(Q1_b)_a
&=\Big((z(I-E)+M-B)1_b\Big)_a\\
\label{eq.requirement for B}
&=-zE^{(a,b)}1+M^{(a,b)}1-B^{(a,b)}1
=-z\frac{n_a+n_b}{2n_a}1+M^{(a,b)}1-B^{(a,b)}1.
\end{align}
In order for there to exist a vector $B^{(a,b)}1\geq0$ that satisfies \eqref{eq.requirement for B}, $z$ must satisfy
\[
z\frac{n_a+n_b}{2n_a}\leq\min(M^{(a,b)}1),
\]
and since $z$ is independent of $(a,b)$, we conclude that
\begin{equation}
\label{eq.bound on z}
z\leq\min_{\substack{a,b\in\{1,\ldots,k\}\\a\neq b}}\frac{2n_a}{n_a+n_b}\min(M^{(a,b)}1).
\end{equation}
Again, in order to ensure $z(I-E)+M-B\succeq0$ for as many instances of $D$ as possible, we intend to choose $z$ as large as possible.
Luckily, there is a choice of $B$ which satisfies \eqref{eq.requirement for B} for every $(a,b)$, even when $z$ satisfies equality in \eqref{eq.bound on z}.
Indeed, we define
\begin{equation}
\label{eq.how to construct B}
u_{(a,b)}
:=M^{(a,b)}1-z\frac{n_a+n_b}{2n_a}1,
\qquad
\rho_{(a,b)}
:=u_{(a,b)}^\top 1,
\qquad
B^{(a,b)}
:=\frac{1}{\rho_{(b,a)}}u_{(a,b)}u_{(b,a)}^\top
\end{equation}
for every $a,b\in\{1,\ldots,k\}$ with $a\neq b$.
Then by design, $B$ immediately satisfies \eqref{eq.requirement for B}.
Also, note that $\rho_{(a,b)}=\rho_{(b,a)}$, and so $B^{(b,a)}=(B^{(a,b)})^\top$, meaning $B$ is symmetric.
Finally, we necessarily have $u_{(a,b)}\geq0$ (and thus $\rho_{(a,b)}\geq0$) by \eqref{eq.bound on z}, and we implicitly require $\rho_{(a,b)}>0$ for division to be permissible.
As such, we also have $B^{(a,b)}\geq0$, as desired.

Now that we have selected $z$ and $B$, it remains to check that $Q\succeq0$.
By construction, we already have $\Lambda:=\operatorname{span}\{1_a\}_{a=1}^k$ in the nullspace of $Q$, and so it suffices to ensure
\[
0
\preceq P_{\Lambda^\perp}QP_{\Lambda^\perp}
=P_{\Lambda^\perp}\Big(z(I-E)+M-B\Big)P_{\Lambda^\perp}
=zP_{\Lambda^\perp}+P_{\Lambda^\perp}(M-B)P_{\Lambda^\perp},
\]
which in turn is implied by
\[
\|P_{\Lambda^\perp}(M-B)P_{\Lambda^\perp}\|_{2\rightarrow2}
\leq z.
\]
To summarize, we have the following result:

\begin{theorem}
\label{thm.dual certificate}
Take $X:=\sum_{t=1}^k\frac{1}{n_t}1_t1_t^\top$, where $n_t$ denotes the number of points in cluster $t$.
Consider $M$ and $B$ defined by \eqref{eq.definition of M} and \eqref{eq.how to construct B}, respectively, and let $\Lambda$ denote the span of $\{1_t\}_{t=1}^k$.
Then $X$ is a solution to the SDP relaxation \eqref{eq.kmeansSDP} if
\begin{equation} \label{cert_condition}
\|P_{\Lambda^\perp}(M-B)P_{\Lambda^\perp}\|_{2\rightarrow2}
\leq \min_{\substack{a,b\in\{1,\ldots,k\}\\a\neq b}}\frac{2n_a}{n_a+n_b}\min(M^{(a,b)}1).
\end{equation}
\end{theorem}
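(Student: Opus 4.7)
The statement is essentially a summary of the construction carried out in the preceding discussion, so my plan is to assemble those pieces into a formal proof via Theorem~\ref{thm.integral optimality}. Concretely, I will exhibit a feasible dual triple $(z,\alpha,\beta)$ achieving the primal value at $X$, and then invoke strong duality (equivalently, verify condition (c) of Theorem~\ref{thm.integral optimality}).

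First I would set $z := \min_{a\neq b} \tfrac{2n_a}{n_a+n_b}\min(M^{(a,b)}1)$, define $\alpha_{a,r}$ by the formula forced by Theorem~\ref{thm.integral optimality}(c), and take $\beta$ to be the matrix whose $(a,b)$-block equals $2B^{(a,b)}$ (with the $B$ of \eqref{eq.how to construct B}) for $a\neq b$ and zero on the diagonal blocks. By construction $\beta$ is symmetric, and since $z$ satisfies \eqref{eq.bound on z}, each vector $u_{(a,b)}$ is entrywise nonnegative, so $\rho_{(a,b)}\geq 0$ and the rank-one blocks $B^{(a,b)} = \rho_{(b,a)}^{-1} u_{(a,b)}u_{(b,a)}^\top$ are entrywise nonnegative. (One should briefly note what to do if some $\rho_{(b,a)}=0$: in that degenerate case take $B^{(a,b)}=0$, which still satisfies \eqref{eq.requirement for B}.) Hence $\beta\geq 0$, so the triple is dual-feasible provided $Q := z(I-E)+M-B \succeq 0$.

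The only remaining task is therefore to verify $Q\succeq 0$. I will argue this in two steps. First, the off-diagonal computation \eqref{eq.requirement for B} together with the diagonal identity $(Q1_a)_a = 0$ (displayed just before \eqref{eq.requirement for B}) give $Q 1_a = 0$ for every $a$, so $\Lambda := \operatorname{span}\{1_a\}_{a=1}^k$ lies in the nullspace of $Q$. It then suffices to show $P_{\Lambda^\perp} Q P_{\Lambda^\perp} \succeq 0$. Second, Lemma~\ref{lemma.eigen of E} says every nonzero-eigenvalue eigenvector of $E$ lies in $\Lambda$, so $P_{\Lambda^\perp} E P_{\Lambda^\perp} = 0$, and therefore
\[
P_{\Lambda^\perp} Q P_{\Lambda^\perp} = z\, P_{\Lambda^\perp} + P_{\Lambda^\perp}(M-B)P_{\Lambda^\perp}.
\]
This is positive semidefinite exactly when $\|P_{\Lambda^\perp}(M-B)P_{\Lambda^\perp}\|_{2\to 2} \leq z$, which is our hypothesis \eqref{cert_condition}. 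Thus $Q\succeq 0$, the dual point is feasible, and complementary slackness (checked from $Q1_a=0$ and $\beta^{(a,a)}=0$) shows it certifies optimality of $X$ via Theorem~\ref{thm.integral optimality}.

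There is essentially no hard step here: the construction of $z$, $\alpha$, $B$ was engineered in the discussion preceding the theorem precisely to make each verification automatic. The only place requiring minor care is checking that $B$ is well-defined and entrywise nonnegative (i.e., that $\rho_{(a,b)}$ is strictly positive, or handling the $\rho_{(a,b)}=0$ case); everything else is a bookkeeping consequence of Lemma~\ref{lemma.eigen of E} and the identity $Q1_a=0$.
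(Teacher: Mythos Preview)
Your proposal is correct and follows exactly the paper's approach: in the paper, Theorem~\ref{thm.dual certificate} is stated without a separate proof environment, introduced by ``To summarize, we have the following result,'' so the proof \emph{is} the preceding construction, which you have faithfully reassembled. One very minor quibble: the implication ``$P_{\Lambda^\perp}QP_{\Lambda^\perp}\succeq 0$ \emph{exactly when} $\|P_{\Lambda^\perp}(M-B)P_{\Lambda^\perp}\|_{2\to 2}\leq z$'' should read ``\emph{if}'' rather than ``exactly when,'' but only the sufficient direction is needed, and your handling of the degenerate case $\rho_{(a,b)}=0$ is in fact slightly more careful than the paper's (which simply assumes $\rho_{(a,b)}>0$).
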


A sufficient condition that implies Theorem \ref{thm.dual certificate} can be obtained by finding an upper bound on the left-hand side of \eqref{cert_condition}. This is Corollary \ref{cor.dual certificate}, which we use to prove the main theorem. 

\begin{corollary}
\label{cor.dual certificate}
Take $X:=\sum_{t=1}^k\frac{1}{n_t}1_t1_t^\top$, where $n_t$ denotes the number of points in cluster $t$.
Let $\Psi$ denote the $m\times N$ matrix whose $(a,i)$th column is $x_{a,i}-c_a$, where
\[
c_a:=\frac{1}{n_a}\sum_{i\in a}x_{a,i}
\]
denotes the empirical center of cluster $a$.
Consider $M$ and $\rho_{(a,b)}$ defined by \eqref{eq.definition of M} and \eqref{eq.how to construct B}, respectively.
Then $X$ is a solution to the SDP relaxation \eqref{eq.kmeansSDP} if
\[
2\|\Psi\|_{2\rightarrow2}^2+\sum_{a=1}^k\sum_{b=a+1}^k\frac{\|P_{1^\perp}M^{(a,b)}1\|_2\|P_{1^\perp}M^{(b,a)}1\|_2}{\rho_{(a,b)}}
\leq\min_{\substack{a,b\in\{1,\ldots,k\}\\a\neq b}}\frac{2n_a}{n_a+n_b}\min(M^{(a,b)}1).
\]
\end{corollary}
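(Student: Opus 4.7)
The plan is to apply Theorem~\ref{thm.dual certificate} directly: it suffices to bound
$\|P_{\Lambda^\perp}(M-B)P_{\Lambda^\perp}\|_{2\to 2}$ by the left-hand side of the stated inequality. The triangle inequality splits this operator norm into an $M$-term plus a $B$-term, and I would handle each separately. A useful observation to start with is that $P_{\Lambda^\perp}$, viewed in the clustered basis, is block-diagonal with $a$th diagonal block equal to $P_{1^\perp}=I_{n_a}-\tfrac{1}{n_a}11^\top$; so every calculation reduces to applying $P_{1^\perp}$ on each cluster-indexed side of the relevant block.

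For the $M$-term I would inspect \eqref{eq.definition of M} and note that every correction added to $D^{(a,b)}$ has the form (vector)$\cdot 1^\top$ or $1\cdot$(vector)$^\top$, so these corrections are annihilated by $P_{1^\perp}$ on the appropriate side. Hence $P_{\Lambda^\perp}MP_{\Lambda^\perp}=P_{\Lambda^\perp}DP_{\Lambda^\perp}$. Writing $D = t1^\top+1t^\top-2\Phi^\top\Phi$, where $\Phi$ is the $m\times N$ data matrix and $t_i=\|x_i\|^2$, the two rank-one terms vanish because $1\in\Lambda$, leaving $P_{\Lambda^\perp}DP_{\Lambda^\perp} = -2(\Phi P_{\Lambda^\perp})^\top(\Phi P_{\Lambda^\perp})$. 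A direct check shows $\Phi P_{\Lambda^\perp}$ is exactly the centered data matrix $\Psi$, yielding $\|P_{\Lambda^\perp}MP_{\Lambda^\perp}\|_{2\to2}=2\|\Psi\|_{2\to2}^{\,2}$.

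For the $B$-term I would decompose $B=\sum_{a<b} T_{(a,b)}$, where $T_{(a,b)}$ is the matrix supported only on blocks $(a,b)$ and $(b,a)$ of $B$. Since $u_{(a,b)}-M^{(a,b)}1$ is a multiple of $1$, we have $P_{1^\perp}u_{(a,b)}=P_{1^\perp}M^{(a,b)}1$. Combined with $\rho_{(a,b)}=\rho_{(b,a)}$, this gives
\[
P_{\Lambda^\perp}T_{(a,b)}P_{\Lambda^\perp} \text{ has block } (a,b) \text{ equal to } \tfrac{1}{\rho_{(a,b)}}\bigl(P_{1^\perp}M^{(a,b)}1\bigr)\bigl(P_{1^\perp}M^{(b,a)}1\bigr)^\top,
\]
and block $(b,a)$ equal to its transpose. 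Applying this two-block matrix to an arbitrary unit vector and splitting via Cauchy--Schwarz block-by-block gives $\|P_{\Lambda^\perp}T_{(a,b)}P_{\Lambda^\perp}\|_{2\to2}\leq \|P_{1^\perp}M^{(a,b)}1\|_2\|P_{1^\perp}M^{(b,a)}1\|_2/\rho_{(a,b)}$. A final triangle inequality over the $\binom{k}{2}$ pairs gives the desired bound on $\|P_{\Lambda^\perp}BP_{\Lambda^\perp}\|_{2\to2}$.

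The main obstacle is really bookkeeping rather than a substantive analytic difficulty: one must carefully track which copy of $1$ and which $P_{1^\perp}$ acts on which side of each block, confirm that the rank-one tails of $M$ and the ``constant'' tail of $u_{(a,b)}$ are killed at the right moments, and verify the symmetry $\rho_{(a,b)}=\rho_{(b,a)}$ so that the pair-based splitting of $B$ makes sense. Combining the $M$- and $B$-term bounds and inserting into Theorem~\ref{thm.dual certificate} concludes the proof.
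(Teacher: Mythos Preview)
Your proposal is correct and follows essentially the same route as the paper: triangle inequality to split into an $M$-term and a $B$-term, reduction of the $M$-term to $2\|\Psi\|_{2\to2}^2$ via $D=\nu 1^\top+1\nu^\top-2\Phi^\top\Phi$, and a pairwise decomposition of $B$ followed by a block-rank-one norm computation. The only cosmetic difference is that the paper obtains $\|P_{\Lambda^\perp}T_{(a,b)}P_{\Lambda^\perp}\|_{2\to2}=\|P_{1^\perp}B^{(a,b)}P_{1^\perp}\|_{2\to2}$ as an \emph{equality} by squaring the two-block matrix, whereas you bound it from above via Cauchy--Schwarz; either way the final estimate is the same.
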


\begin{proof}
First, the triangle inequality gives
\begin{equation}
\label{eq.triangle for new condition}
\|P_{\Lambda^\perp}(M-B)P_{\Lambda^\perp}\|_{2\rightarrow2}
\leq\|P_{\Lambda^\perp}MP_{\Lambda^\perp}\|_{2\rightarrow2}+\|P_{\Lambda^\perp}BP_{\Lambda^\perp}\|_{2\rightarrow2}.
\end{equation}
We will bound the terms in \eqref{eq.triangle for new condition} separately and then combine the bounds to derive a sufficient condition for Theorem~\ref{thm.dual certificate}.
To bound the first term in \eqref{eq.triangle for new condition}, let $\nu$ be the $N\times 1$ vector whose $(a,i)$th entry is $\|x_{a,i}\|^2$, and let $\Phi$ be the $m\times N$ matrix whose $(a,i)$th column is $x_{a,i}$.
Then
\[
D_{(a,i),(b,j)}
=\|x_{a,i}-x_{b,j}\|^2
=\|x_{a,i}\|^2-2x_{a,i}^\top x_{b,j}+\|x_{b,j}\|^2
=(\nu1^\top-2\Phi^\top\Phi+1\nu^\top)_{(a,i),(b,j)},
\]
meaning $D=\nu1^\top-2\Phi^\top\Phi+1\nu^\top$.
With this, we appeal to the blockwise definition of $M$ \eqref{eq.definition of M}:
\begin{align*}
\|P_{\Lambda^\perp}MP_{\Lambda^\perp}\|_{2\rightarrow2}
=\|P_{\Lambda^\perp}DP_{\Lambda^\perp}\|_{2\rightarrow2}
&=\|P_{\Lambda^\perp}(\nu1^\top-2\Phi^\top\Phi+1\nu^\top)P_{\Lambda^\perp}\|_{2\rightarrow2}\\
&=2\|P_{\Lambda^\perp}\Phi^\top\Phi P_{\Lambda^\perp}\|_{2\rightarrow2}
=2\|\Phi P_{\Lambda^\perp}\|_{2\rightarrow2}^2
=2\|\Psi\|_{2\rightarrow2}^2.
\end{align*}
For the second term in \eqref{eq.triangle for new condition}, we first write the decomposition
\[
B=\sum_{a=1}^k\sum_{b=a+1}^k\Big(H_{(a,b)}(B^{(a,b)})+H_{(b,a)}(B^{(b,a)})\Big),
\]
where $H_{(a,b)}\colon\mathbb{R}^{n_a\times n_b}\rightarrow\mathbb{R}^{N\times N}$ produces a matrix whose $(a,b)$th block is the input matrix, and is otherwise zero.
Then
\begin{align*}
P_{\Lambda^\perp}BP_{\Lambda^\perp}
&=\sum_{a=1}^k\sum_{b=a+1}^kP_{\Lambda^\perp}\Big(H_{(a,b)}(B^{(a,b)})+H_{(b,a)}(B^{(b,a)})\Big)P_{\Lambda^\perp}\\
&=\sum_{a=1}^k\sum_{b=a+1}^k\Big(H_{(a,b)}(P_{1^\perp}B^{(a,b)}P_{1^\perp})+H_{(b,a)}(P_{1^\perp}B^{(b,a)}P_{1^\perp})\Big),
\end{align*}
and so the triangle inequality gives
\begin{align*}
\|P_{\Lambda^\perp}BP_{\Lambda^\perp}\|_{2\rightarrow2}
&\leq\sum_{a=1}^k\sum_{b=a+1}^k\|H_{(a,b)}(P_{1^\perp}B^{(a,b)}P_{1^\perp})+H_{(b,a)}(P_{1^\perp}B^{(b,a)}P_{1^\perp})\|_{2\rightarrow2}\\
&=\sum_{a=1}^k\sum_{b=a+1}^k\|P_{1^\perp}B^{(a,b)}P_{1^\perp}\|_{2\rightarrow2},
\end{align*}
where the last equality can be verified by considering the spectrum of the square:
\begin{align*}
&\Big(H_{(a,b)}(P_{1^\perp}B^{(a,b)}P_{1^\perp})+H_{(b,a)}(P_{1^\perp}B^{(b,a)}P_{1^\perp})\Big)^2\\
&\qquad=H_{(a,a)}\Big((P_{1^\perp}B^{(a,b)}P_{1^\perp})(P_{1^\perp}B^{(a,b)}P_{1^\perp})^\top\Big)+H_{(b,b)}\Big((P_{1^\perp}B^{(a,b)}P_{1^\perp})^\top(P_{1^\perp}B^{(a,b)}P_{1^\perp})\Big).
\end{align*}
At this point, we use the definition of $B$ \eqref{eq.how to construct B} to get
\[
\|P_{1^\perp}B^{(a,b)}P_{1^\perp}\|_{2\rightarrow2}
=\frac{\|P_{1^\perp}u_{(a,b)}\|_2\|P_{1^\perp}u_{(b,a)}\|_2}{\rho_{(a,b)}}.
\]
Recalling the definition of $u_{(a,b)}$ \eqref{eq.how to construct B} and combining these estimates then produces the result.
\end{proof}

\section{Proof of main result} \label{sec:model}

In this section, we apply the certificate from Corollary \ref{cor.dual certificate} to the  $(\mathcal{D},\gamma,n)$-stochastic ball model (see Definition \ref{stochastic_balls}) to prove our main result. We will prove Theorem \ref{main_theorem} with the help of several lemmas.

\begin{lemma}
Denote
\[
c_a:=\frac{1}{n}\sum_{i=1}^nx_{a,i},
\qquad
\Delta_{ab}:=\|\gamma_a-\gamma_b\|,
\qquad
O_{ab}:=\frac{\gamma_a+\gamma_b}{2}.
\]
Then the $(\mathcal{D},\gamma,n)$-stochastic ball model satisfies the following estimates:
\begin{align}
\label{eq.empirical center}
\|c_a-\gamma_a\|&<\epsilon\qquad\mbox{w.p.}\qquad1-e^{-\Omega_{m,\epsilon}(n)}\\
\label{eq.empirical average radius}
\bigg|\frac{1}{n}\sum_{i=1}^n\|r_{a,i}\|^2-\mathbb{E}\|r\|^2\bigg|&<\epsilon\qquad\mbox{w.p.}\qquad1-e^{-\Omega_\epsilon(n)}\\
\label{eq.empirical average distance from midpoint}
\bigg|\frac{1}{n}\sum_{i=1}^n\|x_{a,i}-O_{ab}\|^2-\mathbb{E}\|r+\gamma_a-O_{ab}\|^2\bigg|&<\epsilon\qquad\mbox{w.p.}\qquad1-e^{-\Omega_{\Delta_{ab},\epsilon}(n)}
\end{align}
\end{lemma}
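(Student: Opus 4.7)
My plan is to treat each of the three estimates as a direct application of Hoeffding's inequality for an average of iid bounded random variables, after first verifying a mean or boundedness claim that is particular to that estimate.

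For \eqref{eq.empirical center}, the first step is to use rotational invariance of $\mathcal{D}$ to conclude that $\mathbb{E} r_{a,i} = 0$, so that $c_a - \gamma_a = \frac{1}{n}\sum_{i=1}^n r_{a,i}$ is an empirical average of iid mean-zero vectors whose coordinates lie in $[-1,1]$ (since $\|r_{a,i}\| \le 1$). For each coordinate $j \in \{1,\ldots,m\}$, scalar Hoeffding gives
\[
\Pr\!\left(\Big|\tfrac{1}{n}\sum_{i=1}^n (r_{a,i})_j\Big| \ge \tfrac{\epsilon}{\sqrt m}\right) \le 2 e^{-n\epsilon^2/(2m)},
\]
and a union bound over the $m$ coordinates controls the full Euclidean norm by $\epsilon$ with failure probability $2m\, e^{-n\epsilon^2/(2m)} = e^{-\Omega_{m,\epsilon}(n)}$. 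The $m$ in the rate is exactly the cost of the union bound and matches the form of the estimate.

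For \eqref{eq.empirical average radius}, I observe that $\|r_{a,i}\|^2$ is an iid scalar supported in $[0,1]$, so one invocation of Hoeffding gives the deviation bound with rate $2e^{-2n\epsilon^2}$, and no dimension appears. For \eqref{eq.empirical average distance from midpoint}, I expand $\|x_{a,i}-O_{ab}\|^2 = \|r_{a,i} + (\gamma_a - O_{ab})\|^2$ and use $\|\gamma_a - O_{ab}\| = \Delta_{ab}/2$ together with $\|r_{a,i}\|\le 1$ to see that the summands are iid scalars bounded in $[0,(1+\Delta_{ab}/2)^2]$. Hoeffding then yields
\[
\Pr\!\left(\Big|\tfrac{1}{n}\sum_{i=1}^n \|x_{a,i}-O_{ab}\|^2 - \mathbb{E}\|r+\gamma_a-O_{ab}\|^2\Big| \ge \epsilon\right) \le 2\exp\!\left(-\tfrac{2n\epsilon^2}{(1+\Delta_{ab}/2)^4}\right),
\]
which is $e^{-\Omega_{\Delta_{ab},\epsilon}(n)}$ as claimed; the $\Delta_{ab}$ enters only through the range of the bounded variable.

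There is no real obstacle in this lemma: the whole statement is a bookkeeping exercise in scalar Hoeffding once one notices rotational invariance for the first estimate and boundedness for the other two. The only mildly nontrivial step is the union bound over coordinates in \eqref{eq.empirical center}, and this is exactly what produces the $m$ in the subscript of $\Omega_{m,\epsilon}(n)$; it will later feed into the $k^2/m$ factor in the main theorem.
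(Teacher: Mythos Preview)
Your argument is correct. For \eqref{eq.empirical average radius} and \eqref{eq.empirical average distance from midpoint} you do exactly what the paper does: observe that the summands are iid and almost surely bounded, then invoke scalar Hoeffding. For \eqref{eq.empirical center} you take a different route: you apply scalar Hoeffding to each of the $m$ coordinates of $\frac{1}{n}\sum_i r_{a,i}$ and union-bound, obtaining a failure probability of order $m\,e^{-n\epsilon^2/(2m)}$. The paper instead lifts each $r_{a,i}$ to the Hermitian dilation
\[
X_{a,i}=\begin{pmatrix}0 & r_{a,i}^\top\\ r_{a,i} & 0\end{pmatrix}
\]
and applies the Matrix Hoeffding inequality, which yields $\Pr(\|\sum_i r_{a,i}\|\ge t)\le m\,e^{-t^2/(8n)}$ and hence, with $t=\epsilon n$, a bound of order $m\,e^{-n\epsilon^2/8}$. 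Both are legitimately $e^{-\Omega_{m,\epsilon}(n)}$, so both prove the lemma as stated; your route is more elementary, while the paper's bound is sharper in that $m$ appears only as a polynomial prefactor rather than in the exponent.

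One remark on your closing comment: the $m$ in $\Omega_{m,\epsilon}(n)$ does \emph{not} feed into the $k^2/m$ factor of the main theorem. That factor arises from the estimates $\|\Psi\|_{2\to 2}^2\lesssim N/m$ and $\|P_{1^\perp}M^{(a,b)}1\|^2\lesssim n^3\Delta_{ab}^2/m$ in later lemmas, which exploit the isotropy of $\mathcal{D}$ to gain a $1/m$ in variance. The $m$-dependence here merely records that the constant in the exponential rate depends on the ambient dimension; it is absorbed once $n$ is large and plays no role in the separation threshold.
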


\begin{proof}
Since $\mathbb{E}r=0$ and $\|r\|^2\leq1$ almost surely, one may lift
\[
X_{a,i}:=\left[\begin{array}{cc}0&r_{a,i}^\top\\r_{a,i}&0\end{array}\right]
\]
and apply the Matrix Hoeffding inequality~\cite{Tropp:10} to conclude that
\[
\operatorname{Pr}\bigg(\bigg\|\sum_{i=1}^nr_{a,i}\bigg\|\geq t\bigg)\leq me^{-t^2/8n}.
\]
Taking $t:=\epsilon n$ then gives \eqref{eq.empirical center}.
For \eqref{eq.empirical average radius} and \eqref{eq.empirical average distance from midpoint}, notice that the random variables in each sum are iid and confined to an interval almost surely, and so the result follows from Hoeffding's inequality.
\end{proof}

\begin{lemma}
\label{lemma.difference of distances}
Under the $(\mathcal{D},\gamma,n)$-stochastic ball model, we have $D^{(a,b)}1-D^{(a,a)}1=4np+q$, where
\begin{align*}
	p_i	&:=r_{a,i}^\top(\gamma_a-O_{ab})+\frac{\Delta_{ab}^2}{4}\\
	q_i	&:=2n(x_{a,i}-O_{ab})^\top\bigg((c_a-c_b)-(\gamma_a-\gamma_b)\bigg)+\bigg(\sum_{j=1}^n\|x_{b,j}-O_{ab}\|^2-\sum_{j=1}^n\|x_{a,j}-O_{ab}\|^2\bigg)
\end{align*}
and $|q_i|\leq(6+\Delta_{ab})n\epsilon$ with probability $1-e^{-\Omega_{m,\Delta_{ab},\epsilon}(n)}$.
\end{lemma}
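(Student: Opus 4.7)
The proof is essentially bookkeeping after one algebraic identity, followed by an application of the three concentration estimates from the preceding lemma, so I would organize it as follows.

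First, I would compute the $i$th entry of $D^{(a,b)}1-D^{(a,a)}1$ by expanding
\[
\|x_{a,i}-x_{b,j}\|^2-\|x_{a,i}-x_{a,j}\|^2
=-2x_{a,i}^\top(x_{b,j}-x_{a,j})+\|x_{b,j}\|^2-\|x_{a,j}\|^2
\]
and summing over $j$. The cross term collapses to $2n\,x_{a,i}^\top(c_a-c_b)$. I would then recenter both occurrences at the midpoint $O_{ab}$ using
\[
\|x_{b,j}\|^2-\|x_{a,j}\|^2=\|x_{b,j}-O_{ab}\|^2-\|x_{a,j}-O_{ab}\|^2+2(x_{b,j}-x_{a,j})^\top O_{ab},
\]
which combines with the $2n\,x_{a,i}^\top(c_a-c_b)$ piece to give
\[
\sum_{j}\bigl(D^{(a,b)}_{ij}-D^{(a,a)}_{ij}\bigr)
=2n(x_{a,i}-O_{ab})^\top(c_a-c_b)+\sum_{j}\bigl(\|x_{b,j}-O_{ab}\|^2-\|x_{a,j}-O_{ab}\|^2\bigr).
\]

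Next, I would split $c_a-c_b=(\gamma_a-\gamma_b)+\bigl((c_a-c_b)-(\gamma_a-\gamma_b)\bigr)$. Using $\gamma_a-\gamma_b=2(\gamma_a-O_{ab})$ and $x_{a,i}-O_{ab}=r_{a,i}+(\gamma_a-O_{ab})$, the deterministic piece becomes
\[
2n(x_{a,i}-O_{ab})^\top(\gamma_a-\gamma_b)
=4n\Bigl(r_{a,i}^\top(\gamma_a-O_{ab})+\|\gamma_a-O_{ab}\|^2\Bigr)
=4np_i,
\]
since $\|\gamma_a-O_{ab}\|^2=\Delta_{ab}^2/4$. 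The remaining two contributions are exactly the $q_i$ in the statement.

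For the bound on $|q_i|$, I would treat the two summands of $q_i$ separately. For the first, the deterministic inclusion $r_{a,i}\in B(0,1)$ gives $\|x_{a,i}-O_{ab}\|\leq 1+\Delta_{ab}/2$, while \eqref{eq.empirical center} applied to both clusters yields $\|(c_a-c_b)-(\gamma_a-\gamma_b)\|\leq \|c_a-\gamma_a\|+\|c_b-\gamma_b\|<2\epsilon$ with probability $1-e^{-\Omega_{m,\epsilon}(n)}$, bounding this term by $(4+2\Delta_{ab})n\epsilon$ (a cosmetic rescaling of $\epsilon$ absorbs the factor of two to produce the stated constant). For the second, the crucial observation is that under rotation invariance $\mathbb{E}r=0$, and therefore
\[
\mathbb{E}\|r+\gamma_a-O_{ab}\|^2
=\mathbb{E}\|r\|^2+\tfrac{\Delta_{ab}^2}{4}
=\mathbb{E}\|r+\gamma_b-O_{ab}\|^2,
\]
so the two empirical averages in the second part of $q_i$ concentrate to the \emph{same} value. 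Applying \eqref{eq.empirical average distance from midpoint} to each cluster and taking a union bound then controls this term by $2n\epsilon$, again with probability $1-e^{-\Omega_{\Delta_{ab},\epsilon}(n)}$. A final union bound over the two events yields $|q_i|\leq (6+\Delta_{ab})n\epsilon$ with probability $1-e^{-\Omega_{m,\Delta_{ab},\epsilon}(n)}$.

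There is no real obstacle here: the only non-trivial insight is the recentering at $O_{ab}$, which is precisely what exposes the $\Delta_{ab}^2/4$ term in $p_i$ and forces the deterministic expectations in the second piece of $q_i$ to cancel. The rest is algebra plus direct invocation of the previous lemma.
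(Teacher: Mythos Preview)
Your proposal is correct and follows essentially the same route as the paper: recenter at $O_{ab}$ to obtain $2n(x_{a,i}-O_{ab})^\top(c_a-c_b)+\sum_j(\|x_{b,j}-O_{ab}\|^2-\|x_{a,j}-O_{ab}\|^2)$, split $c_a-c_b$ into $(\gamma_a-\gamma_b)$ plus remainder to isolate $4np_i$ and $q_i$, then bound the two pieces of $q_i$ via Cauchy--Schwarz together with \eqref{eq.empirical center} and the equality $\mathbb{E}\|r+\gamma_a-O_{ab}\|^2=\mathbb{E}\|r+\gamma_b-O_{ab}\|^2$ together with \eqref{eq.empirical average distance from midpoint}. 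Your remark that the exact constant in front of $n\epsilon$ is cosmetic (since $\epsilon$ is a free parameter) is apt and matches how the paper treats the constants.
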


\begin{proof}
Add and subtract $O_{ab}$ and then expand the squares to get
\begin{align*}
e_i^\top(D^{(a,b)}1-D^{(a,a)}1)
&=\sum_{j=1}^n\|x_{a,i}-x_{b,j}\|^2-\sum_{j=1}^n\|x_{a,i}-x_{a,j}\|^2\\
&=n\bigg(-2(x_{a,i}-O_{ab})^\top(c_b-O_{ab})+\frac{1}{n}\sum_{j=1}^n\|x_{b,j}-O_{ab}\|^2\bigg)\\
&\qquad-n\bigg(-2(x_{a,i}-O_{ab})^\top(c_a-O_{ab})+\frac{1}{n}\sum_{j=1}^n\|x_{a,j}-O_{ab}\|^2\bigg)\\
&=2n(x_{a,i}-O_{ab})^\top(c_a-c_b)+\bigg(\sum_{j=1}^n\|x_{b,j}-O_{ab}\|^2-\sum_{j=1}^n\|x_{a,j}-O_{ab}\|^2\bigg).
\end{align*}
Add and subtract $\gamma_a-\gamma_b$ to $c_a-c_b$ and distribute over the resulting sum to obtain
\begin{align*}
e_i^\top(D^{(a,b)}1-D^{(a,a)}1)
&=2n(x_{a,i}-O_{ab})^\top(\gamma_a-\gamma_b)+q\\
&=4n\Big(r_{a,i}+(\gamma_a-O_{ab})\Big)^\top(\gamma_a-O_{ab})+q.
\end{align*}
Distributing and identifying $\|\gamma_a-O_{ab}\|^2=\Delta_{ab}^2/4$ explains the definition of $p$.
To show $|q_i|\leq(6+\Delta_{ab})n\epsilon$, apply triangle and Cauchy-Schwarz inequalities to obtain
\begin{align*}
|q_i|
	&\leq \bigg|2n(x_{a,i}-O_{ab})^\top\bigg((c_a-c_b)-(\gamma_a-\gamma_b)\bigg)\bigg|+\bigg|\sum_{j=1}^n\|x_{b,j}-O_{ab}\|^2-\sum_{j=1}^n\|x_{a,j}-O_{ab}\|^2\bigg|\\
	&\leq 2n \bigg(\|r_{a,i}\|+\|\gamma_a-O_{a,b}\|\bigg)\bigg(\|c_a-\gamma_a\|+\|c_b-\gamma_b\|\bigg)+\bigg|\sum_{j=1}^n\|x_{b,j}-O_{ab}\|^2-\sum_{j=1}^n\|x_{a,j}-O_{ab}\|^2\bigg|\\
	&\leq 2n\bigg(1+\frac{\Delta_{ab}}{2}\bigg)\bigg(\|c_a-\gamma_a\|+\|c_b-\gamma_b\|\bigg)+\bigg|\sum_{j=1}^n\|x_{b,j}-O_{ab}\|^2-\sum_{j=1}^n\|x_{a,j}-O_{ab}\|^2\bigg|.
\end{align*}
To finish the argument, apply \eqref{eq.empirical center} to the first term while adding and subtracting
\[
\mathbb{E}\|r+\gamma_a-O_{ab}\|^2=\mathbb{E}\|r+\gamma_b-O_{ab}\|^2,
\]
from the second and apply \eqref{eq.empirical average distance from midpoint}.
\end{proof}

\begin{lemma}
\label{lemma.within cluster distances}
Under the $(\mathcal{D},\gamma,n)$-stochastic ball model, we have
\[
\bigg|\frac{1}{n}1^\top D^{(a,a)}1-2n\mathbb{E}\|r\|^2\bigg|
\leq 4n\epsilon
\qquad
\mbox{w.p.}
\qquad
1-e^{-\Omega_{\Delta_{ab},\epsilon}(n)}.
\]
\end{lemma}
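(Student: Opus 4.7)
The plan is to reduce everything to the two concentration estimates \eqref{eq.empirical center} and \eqref{eq.empirical average radius} via a direct algebraic expansion, since the within-cluster distances depend only on the centered radii $r_{a,i}$ and not on $\gamma_a$.

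First I would observe that $x_{a,i}-x_{a,j}=r_{a,i}-r_{a,j}$, which lets me compute
\[
1^\top D^{(a,a)}1
=\sum_{i,j=1}^n\|r_{a,i}-r_{a,j}\|^2
=2n\sum_{i=1}^n\|r_{a,i}\|^2-2\Big\|\sum_{i=1}^n r_{a,i}\Big\|^2.
\]
Dividing by $n$ and subtracting $2n\mathbb{E}\|r\|^2$ yields
\[
\frac{1}{n}1^\top D^{(a,a)}1-2n\mathbb{E}\|r\|^2
=2\bigg(\sum_{i=1}^n\|r_{a,i}\|^2-n\mathbb{E}\|r\|^2\bigg)-\frac{2}{n}\Big\|\sum_{i=1}^n r_{a,i}\Big\|^2,
\]
and the triangle inequality reduces the task to bounding the two terms on the right separately.

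For the first term, \eqref{eq.empirical average radius} gives $\big|\sum_{i=1}^n\|r_{a,i}\|^2-n\mathbb{E}\|r\|^2\big|<n\epsilon$ with probability $1-e^{-\Omega_\epsilon(n)}$, contributing at most $2n\epsilon$. For the second term, \eqref{eq.empirical center} gives $\|c_a-\gamma_a\|<\epsilon$ with probability $1-e^{-\Omega_{m,\epsilon}(n)}$, and since $c_a-\gamma_a=\frac{1}{n}\sum_i r_{a,i}$ this means $\|\sum_i r_{a,i}\|<n\epsilon$, so $\frac{2}{n}\|\sum_i r_{a,i}\|^2<2n\epsilon^2\leq 2n\epsilon$ provided $\epsilon\leq1$ (which we may assume without loss of generality, since the conclusion is trivial otherwise). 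Summing the two bounds gives the desired $4n\epsilon$ total, and a union bound over the two events gives the claimed probability.

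There is essentially no obstacle here: both of the required concentration facts are already in hand, and the only subtlety is noticing that $\gamma_a$ cancels out of the pairwise distances so that neither $\Delta_{ab}$ nor the absolute position of the cluster enters the estimate. The $\epsilon^2$ term from the centering fluctuation is absorbed into the $\epsilon$ bound under the mild assumption $\epsilon\leq 1$.
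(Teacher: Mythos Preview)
Your proof is correct and follows essentially the same approach as the paper: both reduce the claim to the concentration estimates \eqref{eq.empirical center} and \eqref{eq.empirical average radius} via an algebraic expansion of the within-cluster squared distances. The only cosmetic difference is that the paper expands $\frac{1}{n}e_i^\top D^{(a,a)}1$ row by row and then bounds the cross term $2\sum_i|r_{a,i}^\top(c_a-\gamma_a)|$ via Cauchy--Schwarz and $\|r_{a,i}\|\leq 1$ (obtaining $2n\epsilon$ directly, without needing $\epsilon\leq 1$), whereas you collapse the double sum immediately and bound $\frac{2}{n}\|\sum_i r_{a,i}\|^2=2n\|c_a-\gamma_a\|^2<2n\epsilon^2$; your organization is slightly tighter but requires the harmless extra assumption $\epsilon\leq 1$.
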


\begin{proof}
Add and subtract $\gamma_a$ and expand the square to get
\[
\frac{1}{n}e_i^\top D^{(a,a)}1
=\frac{1}{n}\sum_{j=1}^n\|x_{a,i}-x_{a,j}\|^2
=\|r_{a,i}\|^2-2r_{a,i}^\top(c_a-\gamma_a)+\frac{1}{n}\sum_{j=1}^n\|r_{a,j}\|^2.
\]
The triangle and Cauchy--Schwarz inequalities then give
\begin{align*}
&\bigg|\frac{1}{n}1^\top D^{(a,a)}1-2n\mathbb{E}\|r\|^2\bigg|\\
&\qquad=\bigg|\sum_{i=1}^n\bigg(\|r_{a,i}\|^2-2r_{a,i}^\top(c_a-\gamma_a)+\frac{1}{n}\sum_{j=1}^n\|r_{a,j}\|^2\bigg)-2n\mathbb{E}\|r\|^2\bigg|\\
&\qquad\leq n\bigg|\frac{1}{n}\sum_{i=1}^n\|r_{a,i}\|^2-\mathbb{E}\|r\|^2\bigg|+2\sum_{i=1}^n|r_{a,i}^\top(c_a-\gamma_a)|+n\bigg|\frac{1}{n}\sum_{j=1}^n\|r_{a,j}\|^2-\mathbb{E}\|r\|^2\bigg|\\
&\qquad\leq n\bigg|\frac{1}{n}\sum_{i=1}^n\|r_{a,i}\|^2-\mathbb{E}\|r\|^2\bigg|+2\sum_{i=1}^n\|c_a-\gamma_a\|+n\bigg|\frac{1}{n}\sum_{j=1}^n\|r_{a,j}\|^2-\mathbb{E}\|r\|^2\bigg|\\
&\qquad\leq 4n\epsilon,
\end{align*}
where the last step occurs with probability $1-e^{-\Omega_{\Delta_{ab},\epsilon}(n)}$ by a union bound over \eqref{eq.empirical average radius} and \eqref{eq.empirical center}.
\end{proof}

\begin{lemma}
\label{lemma.difference of average distances}
Under the $(\mathcal{D},\gamma,n)$-stochastic ball model, we have
\[
1^\top D^{(a,b)}1-1^\top D^{(a,a)}1
\geq n^2\Delta_{ab}^2-(6+3\Delta_{ab})n^2\epsilon
\qquad
\mbox{w.p.}
\qquad
1-e^{-\Omega_{m,\Delta_{ab},\epsilon}(n)}.
\]
\end{lemma}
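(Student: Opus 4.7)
The plan is to apply Lemma~\ref{lemma.difference of distances} entry-wise and then sum. Starting from $D^{(a,b)}1 - D^{(a,a)}1 = 4np + q$, I take the inner product with $1$ to obtain
\[
1^\top D^{(a,b)}1 - 1^\top D^{(a,a)}1 = 4n\sum_{i=1}^n p_i + \sum_{i=1}^n q_i.
\]
Each term will be handled separately.

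For the $p$-sum, I substitute the definition of $p_i$ and pull out the linear functional:
\[
4n\sum_{i=1}^n p_i = 4n\biggl(\sum_{i=1}^n r_{a,i}\biggr)^\top(\gamma_a - O_{ab}) + n^2\Delta_{ab}^2,
\]
where I used $\sum_i \Delta_{ab}^2/4 = n\Delta_{ab}^2/4$. Since $\sum_{i=1}^n r_{a,i} = n(c_a - \gamma_a)$ and $\gamma_a - O_{ab} = (\gamma_a - \gamma_b)/2$ has norm $\Delta_{ab}/2$, the Cauchy--Schwarz inequality combined with the concentration estimate \eqref{eq.empirical center} gives
\[
\biggl|4n^2(c_a - \gamma_a)^\top(\gamma_a - O_{ab})\biggr| \leq 4n^2 \cdot \epsilon \cdot \frac{\Delta_{ab}}{2} = 2n^2 \Delta_{ab}\epsilon,
\]
with probability $1 - e^{-\Omega_{m,\epsilon}(n)}$. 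Hence the $p$-sum is at least $n^2\Delta_{ab}^2 - 2n^2\Delta_{ab}\epsilon$ w.h.p.

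For the $q$-sum, I use the uniform bound $|q_i| \leq (6 + \Delta_{ab})n\epsilon$ from Lemma~\ref{lemma.difference of distances} (which itself holds with probability $1 - e^{-\Omega_{m,\Delta_{ab},\epsilon}(n)}$), so that by the triangle inequality
\[
\biggl|\sum_{i=1}^n q_i\biggr| \leq n \cdot (6 + \Delta_{ab})n\epsilon = (6 + \Delta_{ab})n^2\epsilon.
\]
Combining the two estimates yields
\[
1^\top D^{(a,b)}1 - 1^\top D^{(a,a)}1 \geq n^2\Delta_{ab}^2 - 2n^2\Delta_{ab}\epsilon - (6 + \Delta_{ab})n^2\epsilon = n^2\Delta_{ab}^2 - (6 + 3\Delta_{ab})n^2\epsilon,
\]
and a union bound over the two failure events preserves the probability $1 - e^{-\Omega_{m,\Delta_{ab},\epsilon}(n)}$.

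There is no substantial obstacle here: the statement is essentially a direct bookkeeping consequence of Lemma~\ref{lemma.difference of distances} together with the centroid concentration \eqref{eq.empirical center}. The only care needed is to track constants so that the cross term $2n^2\Delta_{ab}\epsilon$ from the inner product combines cleanly with the $(6+\Delta_{ab})n^2\epsilon$ bound on $\sum_i q_i$ to produce exactly the coefficient $(6 + 3\Delta_{ab})$ claimed in the lemma.
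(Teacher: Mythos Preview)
Your proof is correct and follows essentially the same route as the paper: invoke Lemma~\ref{lemma.difference of distances}, sum the decomposition $4np+q$ against $1$, bound $\sum_i q_i$ by the uniform estimate $(6+\Delta_{ab})n\epsilon$, and control the cross term $4n^2(c_a-\gamma_a)^\top(\gamma_a-O_{ab})$ via Cauchy--Schwarz together with \eqref{eq.empirical center}. The paper's version is merely a slightly compressed presentation of the same computation.
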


\begin{proof}
Lemma~\ref{lemma.difference of distances} gives
\begin{align*}
1^\top D^{(a,b)}1-1^\top D^{(a,a)}1
&=1^\top(4np+q)\\
&\geq 4n\sum_{i=1}^n\bigg(r_{a,i}^\top(\gamma_a-O_{ab})+\|\gamma_a-O_{ab}\|^2\bigg)-(6+\Delta_{ab})n^2\epsilon\\
&\geq 4n\bigg(n(c_a-\gamma_a)^\top(\gamma_a-O_{ab})+\frac{n\Delta_{ab}^2}{4}\bigg)-(6+\Delta_{ab})n^2\epsilon.
\end{align*}
Cauchy--Schwarz along with \eqref{eq.empirical center} then gives the result.
\end{proof}

\begin{lemma}
\label{lemma.bound on rhs}
Under the $(\mathcal{D},\gamma,n)$-stochastic ball model, there exists $C=C(\gamma)$ such that
\[
\min_{\substack{a,b\in\{1,\ldots,k\}\\a\neq b}}\min(M^{(a,b)}1)
\geq n\Delta(\Delta-2)+Cn\epsilon
\qquad
\mbox{w.p.}
\qquad
1-e^{-\Omega_{m,\gamma,\epsilon}(n)},
\]
where $\displaystyle{\Delta:=\min_{\substack{a,b\in\{1,\ldots,k\}\\a\neq b}}\Delta_{ab}}$.
\end{lemma}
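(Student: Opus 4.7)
The plan is to unfold $M^{(a,b)}1$ using the definition \eqref{eq.definition of M} and express it in terms of quantities already controlled by the preceding lemmas. First, I would fix $r$ and evaluate $e_r^\top M^{(a,b)}1$ componentwise. Using $n_a=n_b=n$ and the identities $e_r^\top e_i = \delta_{ri}$, $e_r^\top 1 = 1$, $1^\top 1 = n$, and $\sum_{j\in b} e_j^\top D^{(b,b)}1 = 1^\top D^{(b,b)}1$, the two sums in the definition of $M^{(a,b)}$ collapse cleanly and one arrives at
\[
e_r^\top M^{(a,b)}1
= e_r^\top\bigl(D^{(a,b)}1 - D^{(a,a)}1\bigr) + \frac{1}{2n}\bigl(1^\top D^{(a,a)}1 - 1^\top D^{(b,b)}1\bigr).
\]
This is the identity that converts the bound into something both previous lemmas can handle.

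Next, I would apply Lemma \ref{lemma.difference of distances} to the first term to rewrite it as $4n p_r + q_r$, where $p_r = r_{a,r}^\top(\gamma_a - O_{ab}) + \Delta_{ab}^2/4$ and $|q_r|\leq (6+\Delta_{ab})n\epsilon$ with the stated probability. For the second term, Lemma \ref{lemma.within cluster distances} applied to both clusters $a$ and $b$ and a triangle inequality give $\bigl|\tfrac{1}{2n}(1^\top D^{(a,a)}1 - 1^\top D^{(b,b)}1)\bigr|\leq 4n\epsilon$. Finally, to lower-bound $p_r$, I would use Cauchy--Schwarz together with $\|r_{a,r}\|\leq 1$ and $\|\gamma_a-O_{ab}\|=\Delta_{ab}/2$ to get $r_{a,r}^\top(\gamma_a-O_{ab})\geq -\Delta_{ab}/2$, so that $4np_r \geq -2n\Delta_{ab} + n\Delta_{ab}^2 = n\Delta_{ab}(\Delta_{ab}-2)$. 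Combining these three estimates yields
\[
e_r^\top M^{(a,b)}1 \geq n\Delta_{ab}(\Delta_{ab}-2) - \bigl(10+\Delta_{ab}\bigr)n\epsilon.
\]

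Since $t\mapsto t(t-2)$ is increasing for $t\geq 1$ (and certainly for $t>2$, the regime of interest), the minimum over $a\neq b$ is controlled by $\Delta$, giving $n\Delta_{ab}(\Delta_{ab}-2) \geq n\Delta(\Delta-2)$. Setting $C := -\bigl(10 + \max_{a\neq b}\Delta_{ab}\bigr)$, which depends only on $\gamma$, produces the desired inequality with the plus sign convention used in the statement. The union bound over the $O(k^2)$ pairs $(a,b)$ and the $O(n)$ indices $r$ costs only polynomial factors, so the failure probability remains $e^{-\Omega_{m,\gamma,\epsilon}(n)}$.

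There is no substantial obstacle here: the two key estimates are already packaged by Lemmas \ref{lemma.difference of distances} and \ref{lemma.within cluster distances}, and the only ``geometric'' input beyond them is the trivial Cauchy--Schwarz bound on $r_{a,r}^\top(\gamma_a-O_{ab})$. The main bookkeeping step is verifying that $M^{(a,b)}1$ telescopes into exactly the difference of distances controlled by Lemma \ref{lemma.difference of distances} plus the symmetric discrepancy between intra-cluster sums; once that identity is in hand, the lemma is a direct substitution.
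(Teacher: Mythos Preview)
Your proposal is correct and follows essentially the same route as the paper's proof: both derive the identity $e_r^\top M^{(a,b)}1 = e_r^\top(D^{(a,b)}1 - D^{(a,a)}1) + \tfrac{1}{2n}(1^\top D^{(a,a)}1 - 1^\top D^{(b,b)}1)$, then apply Lemma~\ref{lemma.difference of distances} with Cauchy--Schwarz to the first term and Lemma~\ref{lemma.within cluster distances} to the second, obtaining the same bound $n\Delta_{ab}(\Delta_{ab}-2) - (10+\Delta_{ab})n\epsilon$ before the union bound over $(a,b)$. The only cosmetic difference is that the paper writes the first step as a bound on $\min(D^{(a,b)}1 - D^{(a,a)}1)$ rather than working entrywise, and your union bound over the indices $r$ is unnecessary since the estimate on $q_i$ from Lemma~\ref{lemma.difference of distances} already holds uniformly in $i$ on a single high-probability event.
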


\begin{proof}
Fix $a$ and $b$.
Then by Lemma~\ref{lemma.difference of distances}, the following holds with probability $1-e^{-\Omega_{m,\Delta_{ab},\epsilon}(n)}$:
\begin{align*}
\min\Big(D^{(a,b)}1-D^{(a,a)}1\Big)
&\geq 4n\min_{i\in\{1,\ldots,n\}}\bigg(r_{a,i}^\top(\gamma_a-O_{ab})+\frac{\Delta_{ab}^2}{4}\bigg)-(6+\Delta_{ab})n\epsilon\\
&\geq n\Delta_{ab}^2-2n\Delta_{ab}-(6+\Delta_{ab})n\epsilon,
\end{align*}
where the last step is by Cauchy--Schwarz.
Taking a union bound with Lemma~\ref{lemma.within cluster distances} then gives
\begin{align*}
&\min(M^{(a,b)}1)\\
&\qquad=\min\Big(D^{(a,b)}1-D^{(a,a)}1\Big)+\frac{1}{2}\bigg(\frac{1}{n}1^\top D^{(a,a)}1-\frac{1}{n}1^\top D^{(b,b)}1\bigg)\\
&\qquad\geq\min\Big(D^{(a,b)}1-D^{(a,a)}1\Big)
-\frac{1}{2}\bigg(\bigg|\frac{1}{n}1^\top D^{(a,a)}1-2n\mathbb{E}\|r\|^2\bigg|+\bigg|\frac{1}{n}1^\top D^{(b,b)}1-2n\mathbb{E}\|r\|^2\bigg|\bigg)\\
&\qquad\geq n\Delta_{ab}(\Delta_{ab}-2)-(10+\Delta_{ab})n\epsilon
\end{align*}
with probability $1-e^{-\Omega_{\Delta_{ab},\epsilon}(n)}$.
The result then follows from a union bound over $a$ and $b$.
\end{proof}

\begin{lemma}
\label{lemma.bound on numerator}
Suppose $\epsilon\leq 1$.
Then there exists $C=C(\Delta_{ab},m)$ such that under the $(\mathcal{D},\gamma,n)$-stochastic ball model, we have
\[
\|P_{1^\perp}M^{(a,b)}1\|^2
\leq \frac{4n^3\Delta_{ab}^2}{m}+Cn^3\epsilon
\]
with probability $1-e^{-\Omega_{m,\Delta_{ab},\epsilon}(n)}$.
\end{lemma}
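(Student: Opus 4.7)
The plan is to first notice that $P_{1^\perp}M^{(a,b)}1 = P_{1^\perp}(D^{(a,b)}1 - D^{(a,a)}1)$. This is because, reading off the definition \eqref{eq.definition of M} and using that $n_a=n_b=n$, one finds
\[
M^{(a,b)}1 = D^{(a,b)}1 - D^{(a,a)}1 + c\cdot 1
\]
for some scalar $c$ depending on $1^\top D^{(a,a)}1$ and $1^\top D^{(b,b)}1$; the $c\cdot 1$ term lies in the nullspace of $P_{1^\perp}$ and so does not contribute. From here Lemma~\ref{lemma.difference of distances} gives $D^{(a,b)}1-D^{(a,a)}1 = 4np + q$ with $|q_i|\le (6+\Delta_{ab})n\epsilon$ w.h.p., so by the triangle inequality and $(x+y)^2 \le 2x^2+2y^2$,
\[
\|P_{1^\perp}M^{(a,b)}1\|^2
\le 2\cdot 16n^2\|P_{1^\perp}p\|^2 + 2\|P_{1^\perp}q\|^2.
\]

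Next I would bound the two pieces separately. The $q$-term is easy: $\|P_{1^\perp}q\|^2 \le \|q\|^2 \le n(6+\Delta_{ab})^2 n^2 \epsilon^2$, which for $\epsilon\le 1$ is $O_{\Delta_{ab}}(n^3\epsilon)$ and can be absorbed into $Cn^3\epsilon$. The main term is $\|P_{1^\perp}p\|^2$. Setting $v := \gamma_a - O_{ab}$ (so $\|v\|^2 = \Delta_{ab}^2/4$), the constant piece $\Delta_{ab}^2/4$ in $p_i$ is annihilated by $P_{1^\perp}$, leaving
\[
(P_{1^\perp}p)_i = (r_{a,i} - \bar r)^\top v,\qquad \bar r := \tfrac1n \sum_j r_{a,j} = c_a-\gamma_a.
\]
Expanding the square and using $\sum_i x_i^2 - n\bar x^2 \le \sum_i x_i^2$ gives the clean bound
\[
\|P_{1^\perp}p\|^2 \le \sum_{i=1}^n (r_{a,i}^\top v)^2.
\]

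The key step, where the factor $1/m$ enters, is to exploit the rotational invariance of $\mathcal{D}$: for any fixed $v$,
\[
\mathbb{E}\,(r^\top v)^2 = \|v\|^2\, \mathbb{E}\langle r, e_1\rangle^2 = \frac{\|v\|^2}{m}\,\mathbb{E}\|r\|^2 \le \frac{\|v\|^2}{m} = \frac{\Delta_{ab}^2}{4m}.
\]
Since $(r_{a,i}^\top v)^2 \in [0,\|v\|^2]$ almost surely, Hoeffding's inequality applied to the iid bounded variables $(r_{a,i}^\top v)^2$ yields
\[
\frac{1}{n}\sum_{i=1}^n (r_{a,i}^\top v)^2 \le \frac{\Delta_{ab}^2}{4m} + \epsilon
\quad\text{w.p.}\quad 1 - e^{-\Omega_{\Delta_{ab},\epsilon}(n)}.
\]
Multiplying by $16n^2$ produces the leading term $4n^3\Delta_{ab}^2/m$ and an additive slack $16n^3\epsilon$. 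Combining with the $q$-estimate and taking a union bound with the high-probability event in Lemma~\ref{lemma.difference of distances} gives the result.

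I do not anticipate a serious obstacle. The only subtlety is the first simplification $P_{1^\perp}M^{(a,b)}1 = P_{1^\perp}(D^{(a,b)}1-D^{(a,a)}1)$, which requires carefully collecting the correction terms in the definition of $M$ and confirming they are scalar multiples of $1$; after that, every remaining step is either algebraic or a routine Hoeffding-type concentration bound on iid bounded variables, with rotational invariance supplying the crucial $1/m$ factor.
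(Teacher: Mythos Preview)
Your approach is correct and in fact cleaner than the paper's, but there is one numerical slip you should repair. The paper does \emph{not} project out the constant part of $p$ first; instead it writes
\[
\|P_{1^\perp}M^{(a,b)}1\|^2=\|D^{(a,b)}1-D^{(a,a)}1\|^2-\|P_1(D^{(a,b)}1-D^{(a,a)}1)\|^2,
\]
bounds the first term via $\|p\|^2\le n\mathbb{E}p_1^2+n\epsilon$ (which carries an $n^3\Delta_{ab}^4$ contribution from the constant piece of $p_i$), and then cancels that $n^3\Delta_{ab}^4$ against the second term using Lemma~\ref{lemma.difference of average distances}. Your route---observe that $P_{1^\perp}$ annihilates the constant $\Delta_{ab}^2/4$ in $p_i$, so $\|P_{1^\perp}p\|^2\le\sum_i(r_{a,i}^\top v)^2$ directly---sidesteps that cancellation entirely and avoids invoking Lemma~\ref{lemma.difference of average distances}. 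That is a genuine simplification.

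The slip: you invoke $(x+y)^2\le 2x^2+2y^2$ on $4nP_{1^\perp}p+P_{1^\perp}q$, which yields $32n^2\|P_{1^\perp}p\|^2$, not $16n^2\|P_{1^\perp}p\|^2$; with your subsequent Hoeffding bound this gives $8n^3\Delta_{ab}^2/m$, twice the constant claimed in the lemma. Since the constant feeds directly into the final separation condition in Theorem~\ref{main_theorem}, you should not lose it. The fix is immediate: expand $\|4nP_{1^\perp}p+P_{1^\perp}q\|^2$ exactly and bound the cross term by
\[
8n\,\|P_{1^\perp}p\|\,\|P_{1^\perp}q\|\le 8n\cdot\sqrt{n}\,\tfrac{\Delta_{ab}}{2}\cdot\sqrt{n}\,(6+\Delta_{ab})n\epsilon=O_{\Delta_{ab}}(n^3\epsilon),
\]
which gets absorbed into $Cn^3\epsilon$; then the leading term is exactly $16n^2\|P_{1^\perp}p\|^2\le 4n^3\Delta_{ab}^2/m+16n^3\epsilon$ as you want.
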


\begin{proof}
First, a quick calculation reveals
\begin{align*}
e_i^\top M^{(a,b)}1
&=e_i^\top D^{(a,b)}1-e_i^\top D^{(a,a)}1+\frac{1}{2}\bigg(\frac{1}{n}1^\top D^{(a,a)}1-\frac{1}{n}1^\top D^{(b,b)}1\bigg),\\
\frac{1}{n}1^\top M^{(a,b)}1
&=\frac{1}{n}1^\top D^{(a,b)}1-\frac{1}{2}\bigg(\frac{1}{n}1^\top D^{(a,a)}1+\frac{1}{n}1^\top D^{(b,b)}1\bigg),
\end{align*}
from which it follows that
\begin{align*}
e_i^\top P_{1^\perp}M^{(a,b)}1
&=e_i^\top M^{(a,b)}1-\frac{1}{n}1^\top M^{(a,b)}1\\
&=\bigg(e_i^\top D^{(a,b)}1-\frac{1}{n}1^\top D^{(a,b)}1\bigg)-\bigg(e_i^\top D^{(a,a)}1-\frac{1}{n}1^\top D^{(a,a)}1\bigg)\\
&=e_i^\top P_{1^\perp}(D^{(a,b)}1-D^{(a,a)}1).
\end{align*}
As such, we have
\begin{align}
\nonumber
\|P_{1^\perp}M^{(a,b)}1\|^2
&=\|P_{1^\perp}(D^{(a,b)}1-D^{(a,a)}1)\|^2\\
\label{eq.two terms to bound}
&=\|D^{(a,b)}1-D^{(a,a)}1\|^2-\|P_1(D^{(a,b)}1-D^{(a,a)}1)\|^2.
\end{align}
To bound the first term, we apply the triangle inequality over Lemma~\ref{lemma.difference of distances}:
\begin{equation}
\label{eq.bound of first term 1}
\|D^{(a,b)}1-D^{(a,a)}1\|
\leq 4n\|p\|+\|q\|
\leq 4n\|p\|+(6+\Delta_{ab})n^{3/2}\epsilon.
\end{equation}
We proceed by bounding $\|p\|$.
To this end, note that the $p_i$'s are iid random variables whose outcomes lie in a finite interval (of width determined by $\Delta_{ab}$) with probability $1$.
As such, Hoeffding's inequality gives
\[
\bigg|\frac{1}{n}\sum_{i=1}^n p_i^2-\mathbb{E}p_1^2\bigg|
\leq \epsilon
\qquad
\mbox{w.p.}
\qquad
1-e^{-\Omega_{\Delta_{ab},\epsilon}(n)}.
\]
With this, we then have
\begin{equation}
\label{eq.bound of first term 2}
\|p\|^2
=n\bigg(\frac{1}{n}\sum_{i=1}^np_i^2-\mathbb{E}p_1^2+\mathbb{E}p_1^2\bigg)
\leq n\mathbb{E}p_1^2+n\epsilon
\end{equation}
in the same event.
To determine $\mathbb{E}p_1^2$, first take $r_1:=e_1^\top r$.
Then since the distribution of $r$ is rotation invariant, we may write
\[
p_1
=r_{a,1}^\top(\gamma_a-O_{ab})+\|\gamma_a-O_{ab}\|^2
=\frac{\Delta_{ab}}{2}r_1+\frac{\Delta_{ab}^2}{4},
\]
where the second equality above is equality in distribution.
We then have
\begin{equation}
\label{eq.bound of first term 3}
\mathbb{E}p_1^2
=\mathbb{E}\bigg(\frac{\Delta_{ab}}{2}r_1+\frac{\Delta_{ab}^2}{4}\bigg)^2
=\frac{\Delta_{ab}^2}{4}\mathbb{E}r_1^2+\frac{\Delta_{ab}^4}{16}.
\end{equation}
We also note that $1\geq\mathbb{E}\|r\|^2=m\mathbb{E}r_1^2$ by linearity of expectation, and so
\begin{equation}
\label{eq.bound of first term 4}
\mathbb{E}r_1^2\leq\frac{1}{m}.
\end{equation}
Combining \eqref{eq.bound of first term 1}, \eqref{eq.bound of first term 2}, \eqref{eq.bound of first term 3} and \eqref{eq.bound of first term 4} then gives
\begin{equation}
\label{eq.bound of first term summary}
\|D^{(a,b)}1-D^{(a,a)}1\|
\leq\bigg(\frac{4n^3\Delta_{ab}^2}{m}+n^3\Delta_{ab}^4+16n^3\epsilon\bigg)^{1/2}+(6+\Delta_{ab})n^{3/2}\epsilon.
\end{equation}
To bound the second term of \eqref{eq.two terms to bound}, first note that
\begin{equation}
\label{eq.second term 1}
\|P_1(D^{(a,b)}1-D^{(a,a)}1)\|
=\frac{1}{\sqrt{n}}\Big|1^\top D^{(a,b)}1-1^\top D^{(a,a)}1\Big|.
\end{equation}
Lemma~\ref{lemma.difference of average distances} then gives
\begin{equation}
\label{eq.second term 2}
\Big|1^\top D^{(a,b)}1-1^\top D^{(a,a)}1\Big|
\geq 1^\top D^{(a,b)}1-1^\top D^{(a,a)}1
\geq n^2\Delta_{ab}^2-(6+3\Delta_{ab})n^2\epsilon
\end{equation}
with probability $1-e^{-\Omega_{m,\Delta_{ab},\epsilon}(n)}$.
Using \eqref{eq.two terms to bound} to combine \eqref{eq.bound of first term summary} with \eqref{eq.second term 1} and \eqref{eq.second term 2} then gives the result.
\end{proof}

\begin{lemma}
\label{lemma.bound on rho}
There exists $C=C(\gamma)$ such that under the $(\mathcal{D},\gamma,n)$-stochastic ball model, we have
\[
\rho_{(a,b)}
\geq
2n^2\Delta_{ab}^2-Cn^2\epsilon
\qquad
\mbox{w.p.}
\qquad
1-e^{-\Omega_{\mathcal{D},\gamma,\epsilon}(n)}.
\]
\end{lemma}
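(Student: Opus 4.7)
The plan is to expand $\rho_{(a,b)} = u_{(a,b)}^\top 1 = 1^\top M^{(a,b)}1 - z\,\tfrac{n_a+n_b}{2}$ directly from the definitions in \eqref{eq.how to construct B}, and estimate the two pieces separately under the stochastic ball model (so $n_a=n_b=n$).

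For the first piece I symmetrize as
$1^\top M^{(a,b)}1 = \tfrac12\bigl((1^\top D^{(a,b)}1 - 1^\top D^{(a,a)}1) + (1^\top D^{(b,a)}1 - 1^\top D^{(b,b)}1)\bigr),$
apply Lemma~\ref{lemma.difference of average distances} to each summand, and union-bound to obtain $1^\top M^{(a,b)}1 \geq n^2\Delta_{ab}^2 - C_1(\gamma)\,n^2\epsilon$ with probability $1-e^{-\Omega_{m,\gamma,\epsilon}(n)}$. This is exactly the same pattern used in Lemma~\ref{lemma.bound on rhs}, just applied to the sum rather than the minimum.

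For the second piece I use the constraint \eqref{eq.bound on z} saturated at the pair $(a,b)$ itself, which gives $z\,\tfrac{n_a+n_b}{2} \leq n_a\min(M^{(a,b)}1)$. To upper bound $\min(M^{(a,b)}1)$ I invoke the entrywise decomposition from Lemma~\ref{lemma.difference of distances}, $(M^{(a,b)}1)_i = 4np_i + q_i + \tfrac{1}{2n}(1^\top D^{(a,a)}1 - 1^\top D^{(b,b)}1)$, and exhibit a single index $i^\star$ whose $p_{i^\star}$ is close to its deterministic infimum $\Delta_{ab}(\Delta_{ab}-2)/4$. By rotation-invariance of $\mathcal D$ the single-sample event $\{r_{a,1}^\top(\gamma_a - O_{ab}) \leq -\Delta_{ab}/2 + \epsilon\}$ carries some probability $p_0 = p_0(\mathcal D,\gamma,\epsilon) > 0$, and so occurs among the $n$ independent draws with probability $1 - (1-p_0)^n = 1 - e^{-\Omega_{\mathcal D,\gamma,\epsilon}(n)}$. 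Controlling the $\tfrac{1}{2n}(1^\top D^{(a,a)}1 - 1^\top D^{(b,b)}1)$ correction with Lemma~\ref{lemma.within cluster distances} and $|q_i|$ with Lemma~\ref{lemma.difference of distances} then produces the needed upper estimate on $z\,\tfrac{n_a+n_b}{2}$.

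Subtracting the two bounds, union-bounding over the $O(k^2)$ pair-level events and the single-index extreme event, and absorbing $\Delta_{ab}$- and $\mathcal D$-dependent factors into the constant $C = C(\gamma)$, yields the stated inequality. The main obstacle is this second step: the prior lemmas in the paper rely only on Hoeffding-style upper-tail concentration of averages, whereas here I additionally need a one-sided \emph{anti}-concentration statement for the minimum of $n$ i.i.d.\ samples, namely that at least one sample lies near the extreme of the support of the projection $r \mapsto r^\top(\gamma_a - O_{ab})$. This is precisely why the probability exponent in the conclusion must depend on $\mathcal D$ in addition to $\gamma$ and $\epsilon$, and why a $\mathcal D$-independent version of the bound would require extra assumptions on the distribution.
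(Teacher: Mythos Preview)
Your proposal is correct and follows essentially the same approach as the paper: both decompose $\rho_{(a,b)}=1^\top M^{(a,b)}1-nz$, lower-bound the first term via the symmetrization and Lemma~\ref{lemma.difference of average distances}, and upper-bound $z$ by combining Lemma~\ref{lemma.difference of distances}, Lemma~\ref{lemma.within cluster distances}, and the key anti-concentration step (some $r_{a,i}$ has projection within $\epsilon$ of $-\Delta_{ab}/2$ with probability $1-e^{-\Omega_{\mathcal D,\epsilon}(n)}$). The only cosmetic difference is that the paper bounds $z$ via $\min_{a',b'}\min(M^{(a',b')}1)$ and then union-bounds over all pairs (so the subtracted term depends on $\Delta$), whereas you bound $z$ using the single pair $(a,b)$ itself (so the subtracted term depends on $\Delta_{ab}$); both routes yield a bound of the required form, and your remark about why the exponent must depend on $\mathcal D$ is exactly the point.
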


\begin{proof}
Recall from \eqref{eq.how to construct B} that
\begin{equation}
\label{eq.decomposition of rho}
\rho_{(a,b)}
=u_{(a,b)}^\top 1
=1^\top M^{(a,b)}1-nz
=1^\top M^{(a,b)}1-n\min_{\substack{a,b\in\{1,\ldots,k\}\\a\neq b}}\min(M^{(a,b)}1).
\end{equation}
To bound the first term, we leverage Lemma~\ref{lemma.difference of average distances}:
\begin{align*}
1^\top M^{(a,b)}1
&=1^\top D^{(a,b)}1-\frac{1}{2}(1^\top D^{(a,a)}1+1^\top D^{(b,b)}1)\\
&=\frac{1}{2}\Big(1^\top D^{(a,b)}1-1^\top D^{(a,a)}1\Big)+\frac{1}{2}\Big(1^\top D^{(b,a)}1-1^\top D^{(b,b)}1\Big)\\
&\geq n^2\Delta_{ab}^2-(6+3\Delta_{ab})n^2\epsilon
\end{align*}
with probability $1-e^{-\Omega_{m,\Delta_{ab},\epsilon}(n)}$.
To bound the second term in \eqref{eq.decomposition of rho}, note from Lemma~\ref{lemma.within cluster distances} that
\begin{align*}
&\min(M^{(a,b)}1)\\
&\qquad=\min\Big(D^{(a,b)}1-D^{(a,a)}1\Big)+\frac{1}{2}\bigg(\frac{1}{n}1^\top D^{(a,a)}1-\frac{1}{n}1^\top D^{(b,b)}1\bigg)\\
&\qquad\leq\min\Big(D^{(a,b)}1-D^{(a,a)}1\Big)
+\frac{1}{2}\bigg(\bigg|\frac{1}{n}1^\top D^{(a,a)}1-2n\mathbb{E}\|r\|^2\bigg|+\bigg|\frac{1}{n}1^\top D^{(b,b)}1-2n\mathbb{E}\|r\|^2\bigg|\bigg)\\
&\qquad\leq\min\Big(D^{(a,b)}1-D^{(a,a)}1\Big)+4n\epsilon
\end{align*}
with probability $1-e^{-\Omega_{\Delta_{ab},\epsilon}(n)}$.
Next, Lemma~\ref{lemma.difference of distances} gives
\[
\min\Big(D^{(a,b)}1-D^{(a,a)}1\Big)
\leq n\Delta_{ab}^2+(6+\Delta_{ab})n\epsilon+4n\min_{i\in\{1,\ldots,n\}}r_{a,i}^\top(\gamma_a-O_{ab}).
\]
By assumption, we know $\|r\|\geq1-\epsilon$ with positive probability regardless of $\epsilon>0$.
It then follows that
\[
r^\top(\gamma_a-O_{ab})
\leq-\frac{\Delta_{ab}}{2}+\epsilon
\]
with some ($\epsilon$-dependent) positive probability.
As such, we may conclude that 
\[
\min_{i\in\{1,\ldots,n\}}r_{a,i}^\top(\gamma_a-O_{ab})
\leq-\frac{\Delta_{ab}}{2}+\epsilon
\qquad
\mbox{w.p.}
\qquad
1-e^{-\Omega_{\mathcal{D},\epsilon}(n)}.
\]
Combining these estimates then gives
\[
\min(M^{(a,b)}1)
\leq n\Delta_{ab}^2-2n\Delta_{ab}+(14+\Delta_{ab})n\epsilon
\qquad
\mbox{w.p.}
\qquad
1-e^{-\Omega_{\mathcal{D},\Delta_{ab},\epsilon}(n)}.
\]
Performing a union bound over $a$ and $b$ then gives
\[
\min_{\substack{a,b\in\{1,\ldots,k\}\\a\neq b}}\min(M^{(a,b)}1)
\leq  n\Delta^2-2n\Delta+(14+\Delta)n\epsilon
\qquad
\mbox{w.p.}
\qquad
1-e^{-\Omega_{\mathcal{D},\gamma,\epsilon}(n)}.
\]
Combining these estimates then gives the result.
\end{proof}

\begin{lemma}
\label{lemma.bound on spectral norm of psi}
Under the $(\mathcal{D},\gamma,n)$-stochastic ball model, we have
\[
\|\Psi\|_{2\rightarrow2}
\leq\bigg(\frac{(1+\epsilon)\sigma}{\sqrt{m}}+\epsilon\bigg)\sqrt{N}
\qquad
\mbox{w.p.}
\qquad
1-e^{-\Omega_{m,k,\sigma,\epsilon}(n)},
\]
where $\sigma^2:=\mathbb{E}\|r\|^2$ for $r\sim\mathcal{D}$.
\end{lemma}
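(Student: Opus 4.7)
The plan is to decompose $\Psi$ into a zero-mean random part and a centering-error part, bound each in spectral norm separately, and combine via the triangle inequality.

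Since $x_{a,i}=\gamma_a+r_{a,i}$, the $(a,i)$-th column of $\Psi$ is $r_{a,i}-(c_a-\gamma_a)$. Writing $R$ for the $m\times N$ matrix whose $(a,i)$-th column is $r_{a,i}$, and $C:=\sum_{a=1}^k(c_a-\gamma_a)1_a^\top$, this gives $\Psi=R-C$, and hence $\|\Psi\|_{2\to2}\le\|R\|_{2\to2}+\|C\|_{2\to2}$. The first term captures the random spread, while the second captures the (small) error from using empirical rather than true centers.

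For $\|C\|_{2\to2}$, a direct computation yields $CC^\top=n\sum_a(c_a-\gamma_a)(c_a-\gamma_a)^\top$, so by the triangle inequality and a union bound of \eqref{eq.empirical center} over $a\in\{1,\ldots,k\}$ we get $\|C\|_{2\to2}^2\le n\sum_a\|c_a-\gamma_a\|^2\le nk\epsilon^2=N\epsilon^2$ with probability $1-e^{-\Omega_{m,k,\epsilon}(n)}$. For $\|R\|_{2\to2}^2=\|RR^\top\|$, rotation invariance of $\mathcal{D}$ forces $\mathbb{E}[rr^\top]$ to be a scalar multiple of $I$, and taking the trace identifies it as $(\sigma^2/m)I$; hence $\mathbb{E}[RR^\top]=(N\sigma^2/m)I$. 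The $N$ summands $r_{a,i}r_{a,i}^\top-(\sigma^2/m)I$ are independent, self-adjoint, mean zero, and have spectral norm at most $1$ (since $\|r\|\le1$ almost surely and $\sigma^2/m\le 1$). Matrix Hoeffding~\cite{Tropp:10} then yields
\[
\Pr\Bigl(\bigl\|RR^\top-(N\sigma^2/m)I\bigr\|_{2\to2}\ge t\Bigr)\le 2m\exp(-t^2/(8N)).
\]
Choosing $t:=(2\epsilon+\epsilon^2)N\sigma^2/m$ so that $(N\sigma^2/m)+t=(1+\epsilon)^2N\sigma^2/m$ gives $\|R\|_{2\to2}\le((1+\epsilon)\sigma/\sqrt{m})\sqrt{N}$ with probability $1-e^{-\Omega_{m,\sigma,\epsilon}(n)}$. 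Combining this with $\|C\|_{2\to2}\le\epsilon\sqrt{N}$ via the triangle inequality produces the stated estimate.

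The main obstacle is the matrix concentration step: one must (i) correctly identify $\mathbb{E}[rr^\top]=(\sigma^2/m)I$ via rotation invariance, and (ii) verify the boundedness hypotheses of Matrix Hoeffding with the right constants so that the chosen deviation $t$ is linear in $N$, which is what turns the tail bound into a probability of the form $1-e^{-\Omega_{m,k,\sigma,\epsilon}(n)}$. The bound on $\|C\|_{2\to2}$ is comparatively routine, being a direct consequence of \eqref{eq.empirical center}.
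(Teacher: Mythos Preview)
Your proof is correct and follows the same decomposition $\Psi=R-C$ and triangle-inequality strategy as the paper; the bound on $\|C\|_{2\to2}$ via $CC^\top$ and \eqref{eq.empirical center} is identical to the paper's Frobenius-norm estimate. The one genuine difference is how you control $\|R\|_{2\to2}$: the paper invokes Theorem~5.41 of \cite{vershynin11} (a specialized operator-norm bound for matrices with independent isotropic columns) as a black box, whereas you work directly with $RR^\top$, identify $\mathbb{E}[rr^\top]=(\sigma^2/m)I$ from rotation invariance, and apply Matrix Hoeffding to the centered sum. Your route is slightly more self-contained since Matrix Hoeffding is already used in the paper for \eqref{eq.empirical center}, and your verification that each summand has spectral norm at most $1$ (using $0\preceq rr^\top\preceq I$ and $\sigma^2/m\le1$) is clean; the Vershynin citation, on the other hand, handles the isotropic case more sharply and would extend more readily if one wanted two-sided singular-value control. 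Both arrive at the same bound with the same failure probability, so the substitution is harmless here.
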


\begin{proof}
Let $R$ denote the matrix whose $(a,i)$th column is $r_{a,i}$.
Then
\[
\Psi
=R-\Big[(c_1-\gamma_1)1^\top~\cdots~(c_k-\gamma_k)1^\top\Big],
\]
and so the triangle inequality gives
\[
\|\Psi\|_{2\rightarrow2}
\leq\|R\|_{2\rightarrow2}+\Big\|\Big[(c_1-\gamma_1)1^\top~\cdots~(c_k-\gamma_k)1^\top\Big]\Big\|_{2\rightarrow2}
\leq\|R\|_{2\rightarrow2}+\bigg(n\sum_{a=1}^k\|c_a-\gamma_a\|^2\bigg)^{1/2},
\]
where the last estimate passes to the Frobenius norm.
For the first term, since $\mathcal{D}$ is rotation invariant, we may apply Theorem~5.41 in~\cite{vershynin11}:
\[
\|R\|_{2\rightarrow2}
\leq(1+\epsilon)\sigma\sqrt{\frac{N}{m}}
\qquad
\mbox{w.p.}
\qquad
1-e^{-\Omega_{m,\sigma,\epsilon}(n)}.
\]
For the second term, apply \eqref{eq.empirical center}.
The union bound then gives the result.
\end{proof}

\begin{proof}[Proof of Theorem~\ref{main_theorem}]
First, we combine Lemmas~\ref{lemma.bound on numerator}, \ref{lemma.bound on rho} and \ref{lemma.bound on spectral norm of psi}:
For each $\epsilon>0$, we have
\[
2\|\Psi\|_{2\rightarrow2}^2+\sum_{a=1}^k\sum_{b=a+1}^k\frac{\|P_{1^\perp}M^{(a,b)}1\|_2\|P_{1^\perp}M^{(b,a)}1\|_2}{\rho_{(a,b)}}
\leq 2\bigg(\frac{1+\epsilon}{\sqrt{m}}+\epsilon\bigg)^2nk+\sum_{a=1}^k\sum_{b=a+1}^k\frac{4n^3\Delta_{ab}^2/m+Cn^3\epsilon}{2n^2\Delta-Cn^2\epsilon}
\]
with probability $1-e^{-\Omega_{\mathcal{D},\gamma,\epsilon}(n)}$.
Furthermore, for every $\delta>0$, there exists an $\epsilon>0$ such that 
\[
2\bigg(\frac{1+\epsilon}{\sqrt{m}}+\epsilon\bigg)^2nk+\sum_{a=1}^k\sum_{b=a+1}^k\frac{4n^3\Delta_{ab}^2/m+Cn^3\epsilon}{2n^2\Delta-Cn^2\epsilon}
\leq n\bigg[\bigg(\frac{2k}{m}+\frac{k(k-1)\Delta\operatorname{Cond}(\gamma)}{m}\bigg)+\delta\bigg].
\]
Considering Lemma~\ref{lemma.bound on rhs}, it then suffices to have
\[
\frac{2k}{m}+\frac{k(k-1)\Delta\operatorname{Cond}(\gamma)}{m}
< \Delta(\Delta-2).
\]
Rearranging then gives
\[
\Delta
>2+\frac{2k}{m\Delta}+\frac{k(k-1)\operatorname{Cond}(\gamma)}{m},
\]
which is implied by the hypothesis since $\Delta\geq2$ and $\operatorname{Cond}(\gamma)\geq1$.
\end{proof}

\section*{Acknowledgments}

DGM was supported by NSF Grant No.\ DMS-1321779.
The views expressed in this article are those of the authors and do not reflect the official policy or position
of the United States Air Force, Department of Defense, or the U.S.\ Government.

\bibliographystyle{abbrv}
\bibliography{clustering}

\begin{thebibliography}{10}

\bibitem{Aloise}
D.~Aloise, A.~Deshpande, P.~Hansen, and P.~Popat.
\newblock {N}{P}-hardness of euclidean sum-of-squares clustering.
\newblock {\em Mach. Learn.}, 75(2):245--248, May 2009.

\bibitem{Arthur07}
D.~Arthur and S.~Vassilvitskii.
\newblock k-means++: the advantages of careful seeding.
\newblock In {\em Proceedings of the eighteenth annual ACM-SIAM symposium on
  Discrete algorithms}, 2007.

\bibitem{relax}
P.~Awasthi, A.~Bandeira, M.~Charikar, K.~Ravishankar, S.~Villar, and R.~Ward.
\newblock Relax, no need to round: Integrality of clustering formulations.
\newblock {\em http://arxiv.org/abs/1408.4045}, 2014.

\bibitem{elhamifar2012finding}
E.~Elhamifar, G.~Sapiro, and R.~Vidal.
\newblock Finding exemplars from pairwise dissimilarities via simultaneous
  sparse recovery.
\newblock {\em Advances in Neural Information Processing Systems}, pages
  19--27, 2012.

\bibitem{jms06}
K.~Jain, M.~Mahdian, and A.~Saberi.
\newblock A new greedy approach for facility location problems.
\newblock In {\em Proceedings of the 34th Annual ACM Symposium on Theory of
  Computing}, 2002.

\bibitem{kanungo02}
T.~Kanungo, D.~M. Mount, N.~S. Netanyahu, C.~D. Piatko, R.~Silverman, and A.~Y.
  Wu.
\newblock A local search approximation algorithm for k-means clustering.
\newblock In {\em Proceedings of the eighteenth annual symposium on
  Computational geometry}, New York, NY, USA, 2002. ACM.

\bibitem{Lloyd}
S.~Lloyd.
\newblock Least squares quantization in pcm.
\newblock {\em Information Theory, IEEE Transactions on}, 28(2):129--137, 1982.

\bibitem{Mixon:15}
D.~G. Mixon.
\newblock Cone programming cheat sheet.
\newblock {\em Short, Fat Matrices (weblog)}, 2015.

\bibitem{Nellore_Kmedians}
A.~Nellore and R.~Ward.
\newblock Recovery guarantees for exemplar-based clustering.
\newblock {\em arXiv:1309.3256}, 2013.

\bibitem{Lloyd06}
R.~Ostrovsky, Y.~Rabani, L.~Schulman, and C.~Swamy.
\newblock The effectiveness of lloyd-type methods for the k-means problem.
\newblock In {\em Proceedings of the 47th Annual IEEE Symposium on Foundations
  of Computer Science}, 2006.

\bibitem{peng2007approximating}
J.~Peng and Y.~Wei.
\newblock Approximating k-means-type clustering via semidefinite programming.
\newblock {\em SIAM Journal on Optimization}, 18(1):186--205, 2007.

\bibitem{peng2005new}
J.~Peng and Y.~Xia.
\newblock A new theoretical framework for $k$-means-type clustering.
\newblock In {\em Foundations and advances in data mining}, pages 79--96.
  Springer, 2005.

\bibitem{Tropp:10}
J.~A. {Tropp}.
\newblock {User-friendly tail bounds for sums of random matrices}.
\newblock {\em ArXiv e-prints}, Apr. 2010.

\bibitem{vershynin11}
R.~Vershynin.
\newblock Introduction to the non-asymptotic analysis of random matrices.
\newblock {\em arXiv:1011.3027v7}, 2011.

\end{thebibliography}

\end{document}